\title{Smaller Circuits for Bit Addition} 
\author{Mikhail Goncharov}{Neapolis University Pafos and JetBrains Research}{}{}{}
\author{Alexander~S. Kulikov}{JetBrains Research\and \url{https://alexanderskulikov.github.io}}{alexander.s.kulikov@gmail.com}{https://orcid.org/0000-0002-5656-0336}{}
\author{Georgie Levtsov}{Neapolis University Pafos and JetBrains Research}{}{}{}
\authorrunning{M.~Goncharov, A.~S.~Kulikov, and G.~Levtsov} 
\keywords{bit addition, summation, multiplier, multiplication, Boolean, circuit, synthesis, combinational, digital} 
\tikzstyle{dot} = [circle, fill=black, inner sep=0mm, minimum size=1mm]
\tikzstyle{l} = [dotted, thin]
\tikzstyle{input} = [draw=none, inner sep=.2mm]
\tikzstyle{gate} = [draw, circle, inner sep=.2mm, minimum size=2mm]
\tikzstyle{outgate} = [gate, thick]
\tikzstyle{wire} = [draw,->]
\tikzstyle{notwire} = [draw,->,dashed]
\tikzstyle{pair} = [rectangle, draw=gray, inner sep=.7mm]
\DeclareMathOperator{\SUM}{SUM}
\DeclareMathOperator{\ADD}{ADD}
\DeclareMathOperator{\MULT}{MULT}
\DeclareMathOperator{\BA}{BA}
\DeclareMathOperator{\MDFA}{MDFA}
\DeclareMathOperator{\INC}{INC}
\DeclareMathOperator{\size}{size}
\begin{document}
    \maketitle

    \begin{abstract}
        Bit addition arises virtually everywhere in~digital circuits:
        arithmetic operations,
        increment/decrement operators,
        computing addresses and table indices, and so~on.
        Since bit addition is~such a~basic task in~Boolean circuit synthesis,
        a~lot of~research has been done on~constructing efficient circuits
        for various special cases of~it. A~vast majority of~these results are devoted to~optimizing the circuit \emph{depth} (also known as~delay).

        In~this paper, we~investigate the circuit \emph{size} (also known as~area)
        over the full binary basis of~bit addition. Though most of~the known circuits are built from Half Adders and Full Adders,
        we~show that, in~many interesting scenarios, these circuits have suboptimal size.
        Namely, we~improve an~upper bound $5n-3m$ to~$4.5n-2m$,
        where $n$~is the number of~input bits and $m$~is the number of~output bits.
        In~the regimes where $m$~is small compared to~$n$
        (for example, for computing the sum
        of~$n$~bits or~multiplying two $n$-bit integers),
        this leads to~$10\%$ improvement.

        We~complement our theoretical result by~an~open-source implementation
        of~generators producing circuits for bit addition and multiplication.
        The generators allow one to~produce the corresponding circuits
        in~two lines of~code and to~compare them to~existing designs.
    \end{abstract}


    \section{Overview}
    Bit addition arises virtually everywhere in~digital circuits:
    arithmetic operations,
    increment/decrement operators,
    computing addresses and table indices, and so~on.
    Three specific scenarios where it~is used frequently are listed below.
    \begin{itemize}
        \item Adding two $n$-bit numbers.
        \item Computing a~symmetric Boolean function
        (such as~majority or~sorting).
        A~natural way of~doing this is~to~first compute
        the binary representation of~the sum of~$n$~input bits
        (that~is, to~compress $n$~bits into about $\log_2 n$ bits)
        and then to~compute the function at~hand
        out of~the computed binary representation.
        \item To~multiply two $n$-bit numbers, one may first compute
        all partial products (that~is, products of~the bits of~the
        two input numbers) and then sum~up the resulting bits.
    \end{itemize}
    In~terms of~the dot-notation introduced by~Dadda~\cite{dadda}, the three scenarios discussed above are visualized as~shown in~Figure~\ref{figure:dot1}. In~this notation, one places
    bits of~the same significance on~the same vertical layer.

    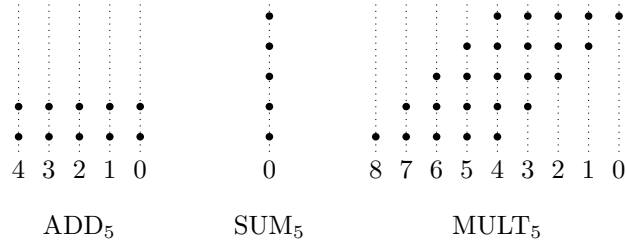
\begin{figure}[ht]
        \begin{center}
            \begin{tikzpicture}
                \begin{scope}[xshift=25mm]
                    \foreach \x [count=\n from 0] in {0} {
                        \draw[l] (\x * \d, -1) -- (\x * \d, 1);
                        \node[below] at (\x * \d, -1) {$\n$};
                    }

                    \foreach \y in {-2, ..., 2}
                    \node[dot] at (0, \y * \d) {};
                    \node at (0, -2) {$\SUM_5$};
                \end{scope}

                \begin{scope}[xshift=0mm]
                    \foreach \x [count=\n from 0] in {2, 1, ..., -2} {
                        \draw[l] (\x * \d, -1) -- (\x * \d, 1);
                        \node[below] at (\x * \d, -1) {$\n$};
                    }

                    \foreach \x in {-2, ..., 2} {
                        \node[dot] at (\x * \d, - 2 * \d) {};
                        \node[dot] at (\x * \d, - \d) {};
                    }
                    \node at (0, -2) {$\ADD_5$};
                \end{scope}

                \begin{scope}[xshift=55mm]
                    \foreach \x [count=\n from 0] in {4, 3, ..., -4} {
                        \draw[l] (\x * \d, -1) -- (\x * \d, 1);
                        \node[below] at (\x * \d, -1) {$\n$};
                    }

                    \foreach \y in {-2, ..., 2}
                    \foreach \x in {-2, ..., 2} {
                        \node[dot] at (\x * \d + \y * \d, \y * \d) {};
                    }
                    \node at (0, -2) {$\MULT_5$};
                \end{scope}
            \end{tikzpicture}
        \end{center}
        \caption{Dot diagrams for three Boolean functions: $\ADD_5$ adds two five-bit numbers, $\SUM_5$ adds five bits, and $\MULT_5$ adds five (appropriately shifted) five-bit numbers.}
        \label{figure:dot1}
    \end{figure}

    There are many other cases where one needs to~add bits.
    Say, one may want
    to~add a~single bit to an~$n$-bit number (the increment operation
    is a~special case),
    or~to~add three $n$-bit numbers, or~to~add a~few bits of~varying significance,
    see Figure~\ref{figure:dot2}.

    \begin{figure}[ht]
        \begin{center}
            \begin{tikzpicture}
                \begin{scope}
                    \foreach \x [count=\n from 0] in {2, 1, ..., -2} {
                        \draw[l] (\x * \d, -1) -- (\x * \d, .5);
                        \node[below] at (\x * \d, -1) {$\n$};
                    }

                    \foreach \x in {-2, ..., 2}
                    \node[dot] at (\x * \d, - 2 * \d) {};
                    \node[dot] at (2 * \d, -\d) {};
                \end{scope}

                \begin{scope}[xshift=30mm]
                    \foreach \x [count=\n from 0] in {2, 1, ..., -2} {
                        \draw[l] (\x * \d, -1) -- (\x * \d, .5);
                        \node[below] at (\x * \d, -1) {$\n$};
                    }

                    \foreach \x in {-2, ..., 2} {
                        \node[dot] at (\x * \d, 0) {};
                        \node[dot] at (\x * \d, -\d) {};
                        \node[dot] at (\x * \d, -2 * \d) {};
                    }
                \end{scope}

                \begin{scope}[xshift=60mm]
                    \foreach \x [count=\n from 0] in {2, 1, ..., -2} {
                        \draw[l] (\x * \d, -1) -- (\x * \d, .5);
                        \node[below] at (\x * \d, -1) {$\n$};
                    }

                    \foreach \x/\y in {-2/-1, -1/-1, -1/0, -1/1, 0/-1, 1/-1, 1/0, 2/-1}
                    \node[dot] at (\x * \d, \y * \d - \d) {};
                \end{scope}
            \end{tikzpicture}
        \end{center}
        \caption{More scenarios of~adding bits of~varying significance.}
        \label{figure:dot2}
    \end{figure}
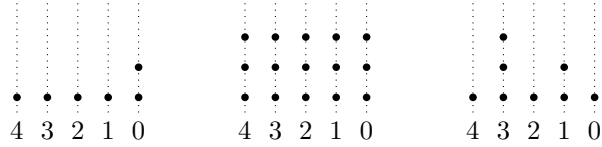

    A~function capturing all such scenarios is~known as~\emph{bit adder}
    \[\BA_n^{s_1, \dotsc, s_n} \colon \{0,1\}^n \to \{0,1\}^m.\] It~is parameterized by~the \emph{significance vector}
    $s=(s_1, \dotsc, s_n) \in \mathbb{Z}_{\ge 0}^n$, takes $n$~input bits $(x_1, \dotsc, x_n) \in \{0,1\}^n$, and outputs
    the binary representation~of
    \[\sum_{i=1}^{n}2^{s_i}x_i.\]
    This way, $\SUM_n=\BA^{0,0,\dotsc,0}_n$ and $\ADD_n=\BA^{0,0,1,1,\dotsc,n-1,n-1}_{2n}$.

    Since bit addition is~such a~basic task in~Boolean circuit synthesis,
    a~lot of~research has been done on~constructing efficient circuits
    for various special cases of~it, see, for example,
    \cite{DBLP:journals/cc/PatersonZ93,
        DBLP:conf/arith/MartelORS95,
        DBLP:journals/tc/StellingMOR98,
        DBLP:conf/arith/BickerstaffSS01}.
    A~vast majority of~these results is~devoted to~optimizing the circuit \emph{depth} (also known as~delay).
    In~this paper, we~investigate the circuit \emph{size} (also known as~area) of~bit addition. Specifically, we~study circuits over the full binary basis.

    Two basic building blocks for adding bits are known as~Half Adder~(HA)
    and Full Adder~(FA). They compute the binary representation of~the sum
    of~two and three bits, respectively (that~is, $\SUM_2$ and $\SUM_3$).
    In~the full binary basis, they can be~implemented in~two and five gates, respectively, see Figure~\ref{figure:sum23}.

    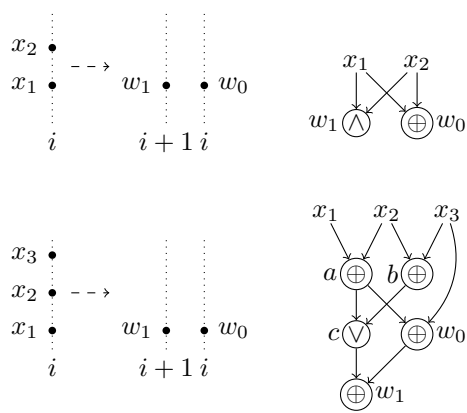
\begin{figure}
        \begin{center}
            \begin{tikzpicture}
                \begin{scope}[label distance=-.9mm, scale=.8]
                    \foreach \n/\x/\y in {1/0/1, 2/1/1}
                    \node[input] (x\n) at (\x, \y) {$x_{\n}$};
                    \node[gate, label=left:$w_1$] (g1) at (0,0) {$\land$};
                    \node[gate, label=right:$w_0$] (g2) at (1,0) {$\oplus$};
                    \foreach \f/\t in {x1/g1, x1/g2, x2/g1, x2/g2}
                    \draw[->] (\f) -- (\t);

                    \begin{scope}[yshift=-45mm, xshift=-5mm]
                        \foreach \n/\x/\y in {1/0/3, 2/1/3, 3/2/3}
                        \node[input] (x\n) at (\x, \y) {$x_{\n}$};
                        \node[gate,label=left:$a$] (g1) at (0.5,2) {$\oplus$};
                        \node[gate,label=left:$b$] (g2) at (1.5,2) {$\oplus$};
                        \node[gate,label=left:$c$] (g3) at (0.5,1) {$\lor$};
                        \node[gate, label=right:$w_0$] (g4) at (1.5,1) {$\oplus$};
                        \node[gate, label=right:$w_1$] (g5) at (0.5,0) {$\oplus$};
                        \foreach \f/\t in {x1/g1, x2/g1, x2/g2, x3/g2, g1/g3, g2/g3, g1/g4, g3/g5, g4/g5}
                        \draw[->] (\f) -- (\t);
                        \path (x3) edge[bend left,->] (g4);
                    \end{scope}
                \end{scope}

                \begin{scope}[xshift=-40mm, yshift=10mm]
                    \draw[l] (0, -1) -- (0, 0.5); \node[below] at (0, -1) {$i$};
                    \node[dot, label=left:$x_1$] at (0, -0.5) {};
                    \node[dot, label=left:$x_2$] at (0, 0) {};

                    \draw[dashed, ->] (0.25, -.25) -- (0.75, -.25);

                    \draw[l] (2, -1) -- (2, 0.5); \node[below] at (2, -1) {$i$};
                    \draw[l] (1.5, -1) -- (1.5, .5); \node[below] at (1.5, -1) {$i+1$};
                    \node[dot, label=left:$w_1$] at (1.5, -0.5) {};
                    \node[dot, label=right:$w_0$] at (2, -0.5) {};
                \end{scope}

                \begin{scope}[xshift=-40mm, yshift=-20mm]
                    \draw[l] (0, -1) -- (0, 0.5); \node[below] at (0, -1) {$i$};
                    \node[dot, label=left:$x_1$] at (0, -0.75) {};
                    \node[dot, label=left:$x_2$] at (0, -0.25) {};
                    \node[dot, label=left:$x_3$] at (0, 0.25) {};

                    \draw[dashed, ->] (0.25, -.25) -- (0.75, -.25);

                    \draw[l] (2, -1) -- (2, 0.5); \node[below] at (2, -1) {$i$};
                    \draw[l] (1.5, -1) -- (1.5, .5); \node[below] at (1.5, -1) {$i+1$};
                    \node[dot, label=left:$w_1$] at (1.5, -0.75) {};
                    \node[dot, label=right:$w_0$] at (2, -0.75) {};
                \end{scope}
            \end{tikzpicture}
        \end{center}
        \caption{The Half Adder (top) and Full Adder (bottom): dot diagrams and circuits.}
        \label{figure:sum23}
    \end{figure}

    Using Half Adders and Full Adders, one can synthesize a~bit adder using the following algorithm that goes back to~Napier's \emph{Rabdologiæ} (1617),
    as~modernized by~Dadda~\cite{dadda}.
    \begin{quote}
        Process the bits layer by~layer, in~the order of~increasing significance.
        While the current significance layer~$i$ contains at~least three bits,
        take three of~them and apply the Full Adder to~replace them with a~pair of~bits
        of~significance $i$~and~$i+1$. If~there are two bits left at~the current layer~$i$, apply the Half Adder to~them to~get a~pair of~bits of~significance $i$~and~$i+1$.
    \end{quote}
    This algorithm ensures that, for any vector $s \in \mathbb{Z}_{\ge 0}^n$,
    \[\operatorname{size}(\BA_n^s) \le 5n-3m.\]
    Indeed, each application of the Full Adder reduces the number of~bits by~one,
    hence the total cost of~all Full Adders is~at~most $5(n-m)$. The Half Adder is~applied at~most once for each of~the significance layers, hence
    the total cost of~all Half Adders is~at~most~$2m$. Hence, the total size
    is~at~most $5(n-m)+2m=5n-3m$.

    By~applying this algorithm to~partial products of~bits of~two input $n$-bit numbers, one gets the well-known Dadda multiplier circuit~\cite{dadda}.
    For many vectors~$s$, the upper bound
    $5n-3m$ is~loose:
    it~does not match the size of~the actual circuit
    produced by~the algorithm.
    A~straightforward example is $s=(0,1,\dotsc,n-1)$:
    in~this case, no~gates are needed whereas the upper bound is~$2n$.
    It~is also worth noting that, in~some cases, the resulting circuit
    is~\emph{provably} optimal.
    For example, for the $\ADD_n$ function (that computes the sum of~two $n$-bit integers),
    the method constructs a~circuit out of a~single Half Adder and $(n-1)$
    Full Adders. The resulting circuit is~known as~\emph{ripple-carry adder} and has size $5n-3$.
    Red'kin~\cite{Red81} proved that there~is no~smaller circuit
    for this function.

    At~the same time, in~many scenarios,
    not only the bound $5n-3m$ is~loose,
    but also the circuit produced by~the algorithm
    is~suboptimal.
    For example, for $\SUM_5$, it~gives a~circuit of~size~$12$ consisting
    of~two Full Adders and one Half Adder, see Figure~\ref{figure:sumfive}.
    However, $\SUM_5$
    can be~computed by a~circuit of~size~$11$ as~shown by~\cite{DBLP:conf/mfcs/KulikovPS22} (see also Figure~\ref{figure:mdfa} later in~the text).
    In~general, whereas the algorithm produces a~circuit of~size about~$5n$
    for $\SUM_n$, this function can be~computed by~a~circuit of~size about $4.5n$
    as~shown by~Demenkov et~al.~\cite{DBLP:journals/ipl/DemenkovKKY10}.

    \begin{figure}[ht]
        \begin{center}
            \begin{tikzpicture}
                \begin{scope}[scale=1]
                    \begin{scope}[scale=1.2, yshift=10mm]
                        \begin{scope}[yshift=20mm]
                            \foreach \x [count=\n from 0] in {0} {
                                \draw[l] (\x * \d, -1) -- (\x * \d, 1.5);
                                \node[below] at (\x * \d, -1) {$\n$};
                            }
                            \foreach \n in {-2,-1,...,2}
                            \node[dot] at (0, \n * \d) {};

                            \path (0.2, 0) edge[->, dashed] node[below] {FA} (0.8, 0);
                        \end{scope}

                        \begin{scope}[xshift=14mm, yshift=20mm]
                            \foreach \x [count=\n from 0] in {0, -1} {
                                \draw[l] (\x * \d, -1) -- (\x * \d, 1.5);
                                \node[below] at (\x * \d, -1) {$\n$};
                            }
                            \foreach \x/\y in {0/-2, 0/-1, 0/0, -1/-2}
                            \node[dot] at (\x * \d, \y * \d) {};
                            \path (0.2, 0) edge[->, dashed] node[below] {FA} (0.8, 0);
                        \end{scope}

                        \begin{scope}[xshift=28mm, yshift=20mm]
                            \foreach \x [count=\n from 0] in {0, -1} {
                                \draw[l] (\x * \d, -1) -- (\x * \d, 1.5);
                                \node[below] at (\x * \d, -1) {$\n$};
                            }
                            \foreach \x/\y in {0/-2, -1/-2, -1/-1}
                            \node[dot] at (\x * \d, \y * \d) {};
                            \path (0.2, 0) edge[->, dashed] node[below] {HA} (0.8, 0);
                        \end{scope}

                        \begin{scope}[xshift=46mm, yshift=20mm]
                            \foreach \x [count=\n from 0] in {0, -1, -2} {
                                \draw[l] (\x * \d, -1) -- (\x * \d, 1.5);
                                \node[below] at (\x * \d, -1) {$\n$};
                                \node[dot] at (\x * \d, -2 * \d) {};
                            }
                        \end{scope}
                    \end{scope}

                    \begin{scope}[yshift=-30mm, scale=.9]
                        \foreach \n/\x/\y in {x_1/0/3, x_2/1/4, x_3/2/4, x_4/4/4, x_5/5/4, w_0/6/3, w_1/6/1.5, w_2/3/0.5}
                        \node[input] (\n) at (\x, \y) {$\n$};
                        \draw (0.5, 2.5) rectangle (2.5, 3.5); \node at (1.5, 3) {FA};
                        \draw (3.5, 2.5) rectangle (5.5, 3.5); \node at (4.5, 3) {FA};
                        \draw (0.5, 1) rectangle (5.5, 2); \node at (3, 1.5) {HA};
                        \foreach \f/\t in {x_1/{0.5, 3}, x_2/{1, 3.5}, x_3/{2, 3.5}, x_4/{4, 3.5}, x_5/{5, 3.5}, {5.5, 3}/w_0, {2.5, 3}/{3.5, 3},
                            {1.5, 2.5}/{1.5, 2}, {4.5, 2.5}/{4.5, 2}, {5.5, 1.5}/w_1, {3, 1}/w_2}
                        \draw[->] (\f) -- (\t);
                    \end{scope}

                    \begin{scope}[label distance=-1mm, xshift=70mm, yshift=20mm]
                        \foreach \n/\x/\y in {1/0/3, 2/1/3, 3/2/3, 4/2.5/1, 5/3.5/1}
                        \node[input] (x\n) at (\x, \y) {$x_{\n}$};
                        \node[gate,label=left:$g_1$] (g1) at (0.5,2) {$\oplus$};
                        \node[gate,label=left:$g_2$] (g2) at (1.5,2) {$\oplus$};
                        \node[gate,label=left:$g_3$] (g3) at (0.5,1) {$\lor$};
                        \node[gate,label=left:$g_4$] (g4) at (1.5,1) {$\oplus$};
                        \node[gate,label=left:$g_5$] (g5) at (0.5,0) {$\oplus$};
                        \node[gate,label=left:$g_6$] (g6) at (2,-1) {$\oplus$};
                        \node[gate,label=right:$g_7$] (g7) at (3,-1) {$\oplus$};
                        \node[gate,label=right:$g_8$] (g8) at (2,-2) {$\lor$};
                        \node[gate, label=right:$w_0$] (g9) at (3,-2) {$\oplus$};
                        \node[gate, label=right:$g_9$] (g10) at (2,-3) {$\oplus$};
                        \node[gate, label=right:$w_1$] (g11) at (2,-4) {$\oplus$};
                        \node[gate, label=left:$w_2$] (g12) at (1,-4) {$\land$};

                        \foreach \f/\t in {x1/g1, x2/g1, x2/g2, x3/g2, g1/g3, g2/g3, g1/g4, g3/g5, g4/g5, g4/g6, x4/g6, x4/g7, x5/g7, g6/g8, g7/g8, g8/g10, g6/g9, g9/g10, g10/g11, g10/g12}
                        \draw[->] (\f) -- (\t);

                        \path (x3) edge[->,bend left] (g4);
                        \path (x5) edge[->,bend left=35] (g9);
                        \path (g5) edge[->,bend right=25] (g11);
                        \path (g5) edge[->,bend right=15] (g12);

                        \draw[dashed] (-0.5,-0.25) rectangle (2,2.5);
                        \draw[dashed] (1.25,-3.25) rectangle (4,-0.5);
                        \draw[dashed] (0,-3.5) rectangle (3,-4.5);
                    \end{scope}
                \end{scope}
            \end{tikzpicture}
        \end{center}
        \caption{A~circuit of~size~$12$ computing~$\SUM_5$ composed of~two Full Adders and one Half Adder: dot notation (top left), block structure (bottom left), and a~circuit (right).}
        \label{figure:sumfive}
    \end{figure}
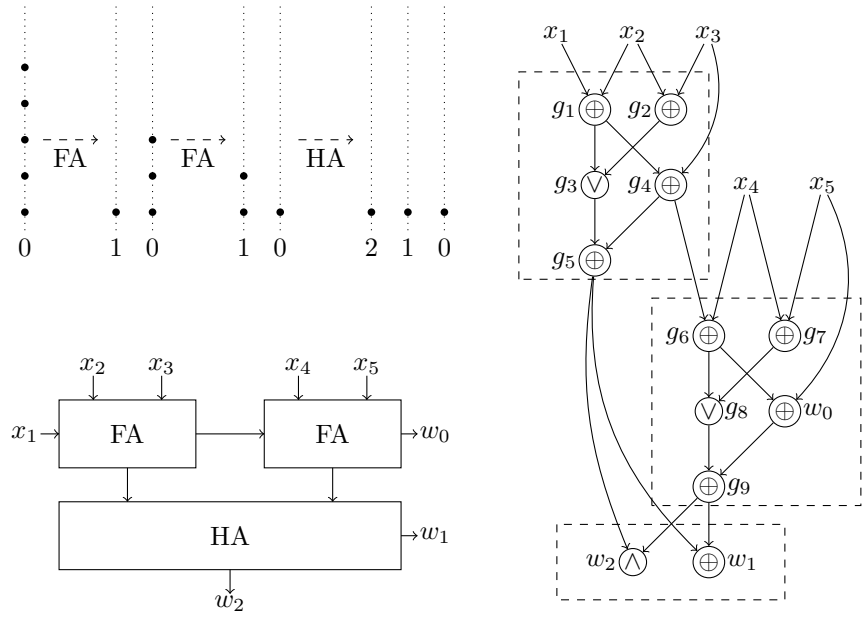

    In~this paper, we~generalize the construction by~Demenkov et~al.
    Namely, we~prove an~upper bound $4.5n-2m$
    for the circuit size of~bit addition.
    In~the regimes where $m$~is~small
    compared to~$n$, this gives a~circuit that is~about $10\%$ smaller.
    This applies to~the Dadda multiplier.
    We~complement our theoretical result by~an~open source implementation
    of~generators producing circuits for bit addition and multiplication.

    \section{General Setting}
    In~this section,
    we~formally introduce the Boolean functions
    studied in~this paper as~well as~the main building blocks
    for computing them.

    \subsection{Boolean Functions}
    The main Boolean function studied in~this paper
    is~\emph{bit adder}
    \[\BA_n^{s_1, \dotsc, s_n} \colon \{0,1\}^n \to \{0,1\}^m.\]
    It~computes the binary representation of~the weighted sum of~input bits:
    \[\sum_{i=1}^{n}2^{s_i}x_i.\]
    In~most interesting scenarios, all bits of~the binary representation of~this sum depend on~the input and the number of~outputs can
    be~expressed as~follows:
    \[m=\left\lceil \log_2\left( \sum_{i=1}^{n}2^{s_i} + 1\right) \right\rceil.\]
    In~such cases,
    \[\BA(x_1, \dotsc, x_n)=(y_0, \dotsc, y_{m-1}) \colon \sum_{i=1}^{n}2^{s_i}x_i=\sum_{i=0}^{m-1}2^iy_i.\]
    However, for some other significance vectors, some of~the bits
    of~the binary representation of~the sum are identically equal to~zero (and thus, do~not depend on~the input). We~exclude such bits from the outputs. Thus, more generally, when we~say that
    \[\BA(x_1, \dotsc, x_n)=(y_0, \dotsc, y_{m-1}),\]
    we~mean that there exists a~vector $t=(t_0, \dotsc, t_{m-1}) \in \mathbb{Z}_{\ge 0}$ such that $t_0 < t_1 < \dotsb < t_{m-1}$ and
    \[\sum_{i=1}^{n}2^{s_i}x_i=\sum_{i=0}^{m-1}2^{t_i}y_i.\]
    It~is not difficult to~see that the vector~$t$ is~unique and that $m \le n$.

    This way, the goal of~bit addition is~to~``flatten''
    the distribution of~bits, that~is, to~leave at~most one bit
    at~each significance layer. Figure~\ref{figure:baexample}
    gives an~example.

    \begin{figure}[ht]
        \begin{center}
            \begin{tikzpicture}
                \begin{scope}
                    \foreach \x [count=\n from 0] in {2, 1, ..., -4} {
                            \draw[l] (\x * \d, -1) -- (\x * \d, .5);
                            \node[below] at (\x * \d, -1) {$\n$};
                        }

                    \foreach \x/\y in {-4/-1, -3/-1, -3/0, -3/1, 1/-1, 1/0, 2/-1}
                    \node[dot] at (\x * \d, \y * \d - \d) {};
                \end{scope}

                \draw[dashed, ->] (1.5, -\d) -- (2.5, -\d);

                \begin{scope}[xshift=40mm]
                    \foreach \x [count=\n from 0] in {5, 4, ..., -2} {
                        \draw[l] (\x * \d, -1) -- (\x * \d, .5);
                        \node[below] at (\x * \d, -1) {$\n$};
                    }
                    \foreach \x in {5, 4, 3, 0, -1, -2}
                        \node[dot] at (\x * \d, -2 * \d) {};
                \end{scope}
            \end{tikzpicture}
        \end{center}
    \caption{The function $\BA_7^{0, 1, 1, 5, 5, 5, 6} \colon \{0,1\}^7 \to \{0,1\}^6$ replaces seven bits of~significance $(0, 1, 1, 5, 5, 5, 6)$ with six bits of~significance $(0, 1, 2, 5, 6, 7)$.}
    \label{figure:baexample}
    \end{figure}
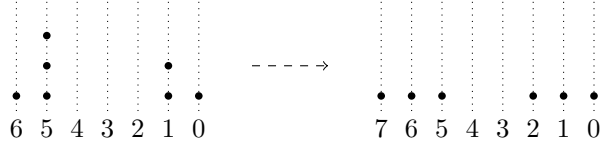

    Many practically important Boolean functions can be~computed using bit summation.
    \begin{itemize}
        \item The function $\SUM_n \colon \{0,1\}^n \to \{0,1\}^{\lceil \log_2(n+1) \rceil}$
        computes the sum of~$n$ bits: \[\SUM_n(x_1, \dotsc, x_n)=\ADD_n^{0,0,\dotsc,0}(x_1, \dotsc, x_n).\]
        \item The function $\ADD_n \colon \{0,1\}^{2n} \to \{0,1\}^{n+1}$ computes the sum
        of~two $n$-bit numbers:
        \[
            \ADD_n(x_0, \dotsc, x_{n-1}, y_0, \dotsc, y_{n-1})
            =\BA_{2n}^{0,\dotsc,n-1,0,\dotsc,n-1}(x_0, \dotsc, x_{n-1}, y_0, \dotsc, y_{n-1}).
        \]
        \item The function $\MULT_n \colon \{0,1\}^{2n} \to \{0,1\}^{2n}$ computes the product
        of~two $n$-bit numbers:
        \[
            \MULT_n(x_0, \dotsc, x_{n-1}, y_0, \dotsc, y_{n-1})=\BA_{n^2}^{(i+j)_{0 \le i, j < n}}\left(\left(x_i \land y_j\right)_{0 \le i, j < n}\right).
        \]
    \end{itemize}

    \subsection{Boolean Circuits}
    A~circuit is~a~natural way of~computing Boolean functions.
    It~is an~acyclic directed graph of in-degree $0$~and~$2$ whose $n+2$~source
    nodes are labeled with input variables
    $x_1, \dotsc, x_n$ and constants $0$~and~$1$, whereas all other nodes
    are labeled with binary Boolean operations.
    The inputs nodes are called input gates, all other nodes are called internal gates.
    Each gate computes
    a~(single-output) Boolean function of~$x_1, \dotsc, x_n$. If~$m$ gates of the
    circuit are marked as~outputs, it~computes a~function of~the form $\{0,1\}^n \to \{0,1\}^m$.
    For a~circuit~$C$, its size, $\operatorname{size}(C)$,
    is~the number of~internal gates of~$C$,
    whereas its depth, $\operatorname{depth}(C)$,
    is~the maximum length of~a~path
    from an~input gate of~$C$ to~an~output gate of~$C$.

    \subsection{Basic Building Blocks}
    As~discussed before, the Half Adder and Full Adder are basic building
    blocks for computing bit addition. Figure~\ref{figure:sum16fa}
    shows how to~synthesize a~circuit of~size~$63$ computing $\SUM_{16}$
    out of~four Half Adders and eleven Full Adders.
    It~is not difficult to~see that a~similar block structure can
    be~used for any~$n$ yielding a~circuit of size at~most~$5n$ for $\SUM_n$.

    \begin{figure}[ht]
        \begin{center}
            \begin{tikzpicture}[scale=0.8]
                \foreach \n in {2,...,17} {
                    \tikzmath{\j=int(\n - 1);}
                    \node[input] (\n) at (\n, 6) {$x_{\j}$};
                }
                \foreach \n in {2, 4, ..., 16} {
                    \draw (\n - 0.25, 4.5) rectangle (\n + 1.25, 5.5);
                    \node at (\n + 0.5, 5) {\ifnumless{\n}{3}{HA}{FA}};
                    \tikzmath{\i=int(\n + 1);}
                    \draw[->] (\n) -- (\n, 5.5);
                    \draw[->] (\i) -- (\i, 5.5);
                    \ifnumless{\n}{16}{\draw[->] (\n + 1.25, 5) -- (\n + 1.75, 5);}{}
                }
                \foreach \n in {2, 6, 10, 14} {
                    \draw (\n - 0.25, 3) rectangle (\n + 3.25, 4);
                    \node at (\n + 1.5, 3.5) {\ifnumless{\n}{3}{HA}{FA}};
                    \draw[->] (\n + 0.5, 4.5) -- (\n + 0.5, 4);
                    \draw[->] (\n + 2.5, 4.5) -- (\n + 2.5, 4);
                    \ifnumless{\n}{14}{\draw[->] (\n + 3.25, 3.5) -- (\n + 3.75, 3.5);}{}
                }
                \draw (1.75, 1.5) rectangle (9.25, 2.5); \node at (5.5, 2) {HA};
                \draw (9.75, 1.5) rectangle (17.25, 2.5); \node at (13.5, 2) {FA};
                \draw (1.75, 0) rectangle (17.25, 1); \node at (9.5, 0.5) {HA};

                \foreach \n/\x/\y in {w_0/18/5, w_1/18/3.5, w_2/18/2, w_3/18/0.5, w_4/9.5/-0.75}
                \node[input] (\n) at (\x, \y) {$\n$};

                \foreach \f/\t in {{17.25, 5}/w_0, {17.25, 3.5}/w_1, {17.25, 2}/w_2, {17.25, 0.5}/w_3, {3.5, 3}/{3.5, 2.5},
                    {7.5, 3}/{7.5, 2.5}, {11.5, 3}/{11.5, 2.5}, {15.5, 3}/{15.5, 2.5},
                    {5.5, 1.5}/{5.5, 1}, {13.5, 1.5}/{13.5, 1}, {9.5, 0}/w_4, {9.25, 2}/{9.75, 2}}
                \draw[->] (\f) -- (\t);
            \end{tikzpicture}
        \end{center}
        \caption{A~circuit computing $\SUM_{16}$ composed out~of four Half Adders and eleven Full Adders. Its size is $4 \cdot 2 + 11 \cdot 5=63$.}
        \label{figure:sum16fa}
    \end{figure}
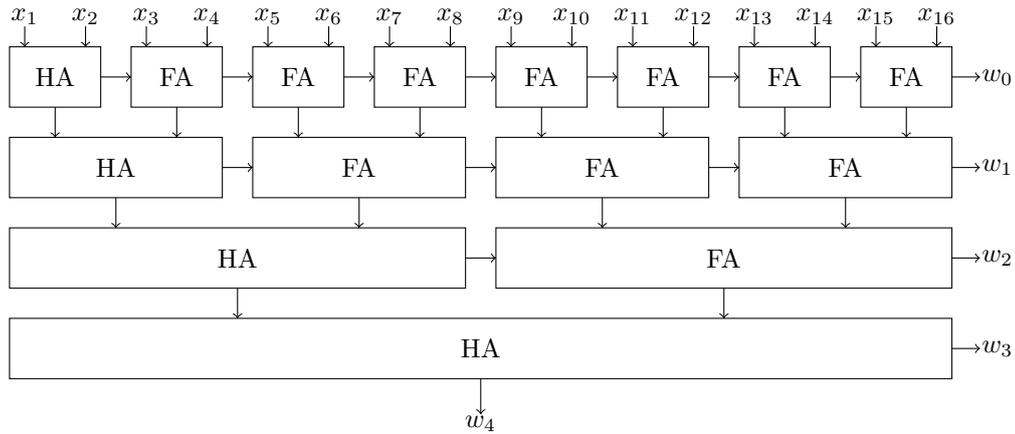

    It~turns out that better circuit designs are possible for $\SUM_n$
    as~shown by~Demenkov et~al.~\cite{DBLP:journals/ipl/DemenkovKKY10}.
    Consider two consecutive Full Adders shown on~the top left of~Figure~\ref{figure:mdfa}. The corresponding function DFA (for Double Full Adder) has the following specification: \[\operatorname{DFA}(x_1, x_2,x_3,x_4,x_5)=(b_0,b_1,a_1) \colon x_1+x_2+x_3+x_4+x_5=b_0+2(b_1+a_1).\]
    Then, MDFA (for Modified Double Full Adder) has the following specification:
    \[\MDFA(x_1 \oplus x_2, x_2, x_3, x_4, x_4 \oplus x_5)=(b_0, a_1, a_1 \oplus b_1).\]
    That~is, for pairs of~bits $(x_1, x_2)$, $(x_4, x_5)$, and $(a_1, b_1)$
    it~uses a~slightly different encoding: $(p, p \oplus q)$ instead of~$(p,q)$.
    We~call such bits paired and show them in~gray boxes in~dot diagrams.
    It~allows one to~compute MDFA in~eight gates (whereas the circuit size of~DFA is~10). Moreover, the corresponding circuit of~size~eight is~just a~part
    of~an~optimal circuit of~size~$11$ computing~$\SUM_5$ shown on~the right
    of~Figure~\ref{figure:mdfa}.

    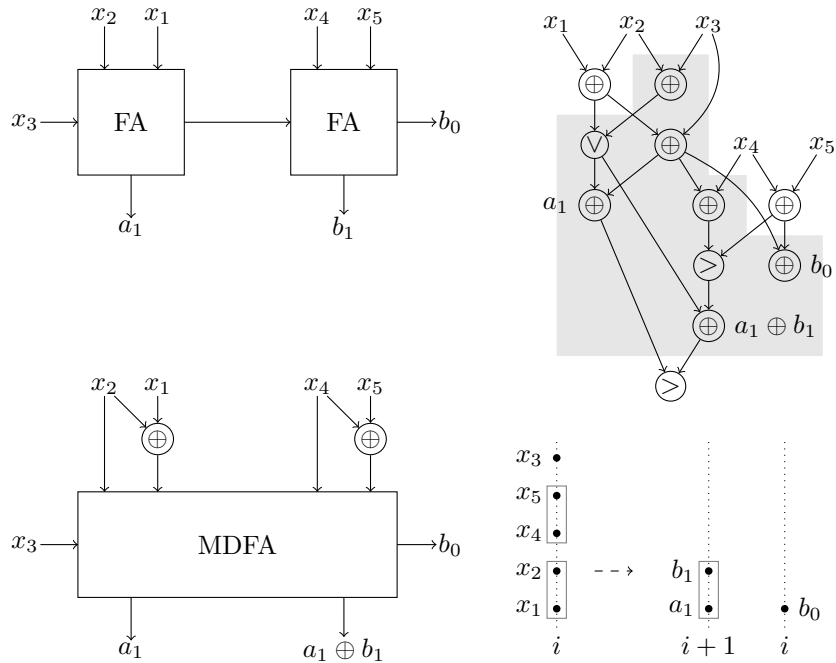
\begin{figure}[ht]
        \begin{center}
            \begin{tikzpicture}[scale=1]
                \begin{scope}[scale=.7]
                    \draw (1,0) rectangle (3,2); \node at (2,1) {FA};
                    \draw (5,0) rectangle (7,2); \node at (6,1) {FA};
                    \foreach \n/\x/\y in {3/0/1, 2/1.5/3, 1/2.5/3, 4/5.5/3, 5/6.5/3}
                    \node[input] (\n) at (\x,\y) {$x_{\n}$};
                    \foreach \n/\t/\x/\y in {a1/a_1/2/-1, b1/b_1/6/-1, b0/b_0/8/1}
                    \node[input] (\n) at (\x,\y) {$\t$};
                    \draw[->] (3)--(1,1);
                    \draw[->] (2)--(1.5,2);
                    \draw[->] (1)--(2.5,2);
                    \draw[->] (4)--(5.5,2);
                    \draw[->] (5)--(6.5,2);
                    \draw[->] (3,1)--(5,1);
                    \draw[->] (7,1)--(b0);
                    \draw[->] (2,0)--(a1);
                    \draw[->] (6,0)--(b1);
                \end{scope}

                \begin{scope}[scale=.7,yshift=-80mm]
                    \draw (1,0) rectangle (7,2); \node at (4,1) {MDFA};
                    \foreach \n/\x/\y in {3/0/1, 2/1.5/4, 1/2.5/4, 4/5.5/4, 5/6.5/4}
                    \node[input] (\n) at (\x,\y) {$x_{\n}$};
                    \node[gate] (xor1) at (2.5,3) {$\oplus$};
                    \node[gate] (xor2) at (6.5,3) {$\oplus$};
                    \foreach \n/\t/\x/\y in {a1/a_1/2/-1, b1/{a_1 \oplus b_1}/6/-1, b0/b_0/8/1}
                    \node[input] (\n) at (\x,\y) {$\t$};
                    \draw[->] (3)--(1,1);
                    \draw[->] (2)--(1.5,2);
                    \draw[->] (1) -- (xor1); \draw[->] (2) -- (xor1);
                    \draw[->] (xor1) -- (2.5,2);
                    \draw[->] (4)--(5.5,2);
                    \draw[->] (5)-- (xor2); \draw[->] (xor2) -- (6.5,2); \draw[->] (4) -- (xor2);
                    \draw[->] (7,1)--(b0);
                    \draw[->] (2,0)--(a1);
                    \draw[->] (6,0)--(b1);
                \end{scope}

                \begin{scope}[xshift=75mm, yshift=-50mm]
                    \draw[l] (-0.5, -1) -- (-0.5, 1.5);
                    \node[below] at (-0.5, -1) {$i$};
                    \node[dot, label=left:$x_1$] (x1) at (-0.5, -0.75) {};
                    \node[dot, label=left:$x_2$] (x2) at (-0.5, -0.25) {};
                    \node[dot, label=left:$x_4$] (x4) at (-0.5, 0.25) {};
                    \node[dot, label=left:$x_5$] (x5) at (-0.5, 0.75) {};
                    \node[dot, label=left:$x_3$] (x3) at (-0.5, 1.25) {};
                    \node[fit=(x1) (x2), pair] {};
                    \node[fit=(x4) (x5), pair] {};

                    \draw[dashed, ->] (0, -.25) -- (0.5, -.25);

                    \draw[l] (2.5, -1) -- (2.5, 1.5); \node[below] at (2.5, -1) {$i$};
                    \draw[l] (1.5, -1) -- (1.5, 1.5); \node[below] at (1.5, -1) {$i+1$};
                    \node[dot, label=left:$a_1$] (a1) at (1.5, -0.75) {};
                    \node[dot, label=left:$b_1$] (b1) at (1.5, -0.25) {};
                    \node[dot, label=right:$b_0$] at (2.5, -0.75) {};
                    \node[fit=(a1) (b1), pair] {};
                \end{scope}

                \begin{scope}[yscale=.8, xshift=70mm, yshift=-5mm]
                    \draw[draw=none, rounded corners=0,fill=gray!20] (0,1.5)--(1,1.5)--(1,2.5)--(2,2.5)--(2,0.5)--(2.5,0.5)--(2.5,-0.5)--(3.5,-0.5)--
                    (3.5,-2.5)--(0,-2.5)--(0,1.5);

                    \foreach \n/\x/\y in {1/0/3, 2/1/3, 3/2/3, 4/2.5/1, 5/3.5/1}
                    \node[input] (x\n) at (\x, \y) {$x_{\n}$};
                    \node[gate] (g1) at (0.5,2) {$\oplus$};
                    \node[gate] (g2) at (1.5,2) {$\oplus$};
                    \node[gate] (g3) at (0.5,1) {$\lor$};
                    \node[gate] (g4) at (1.5,1) {$\oplus$};
                    \node[gate, label=left:$a_1$] (g5) at (0.5,0) {$\oplus$};
                    \node[gate] (g6) at (2,0) {$\oplus$};
                    \node[gate] (g7) at (3,0) {$\oplus$};
                    \node[gate] (g8) at (2,-1) {$>$};
                    \node[gate, label=right:$b_0$] (g9) at (3,-1) {$\oplus$};
                    \node[gate, label=right:$a_1 \oplus b_1$] (g10) at (2,-2) {$\oplus$};
                    \node[gate] (g11) at (1.5,-3) {$>$};

                    \foreach \f/\t in {x1/g1, x2/g1, x2/g2, x3/g2, g1/g3, g2/g3, g1/g4, g3/g5, g4/g5, x4/g6, g4/g6, x4/g7, x5/g7, g6/g8, g7/g8, g7/g9, g3/g10, g8/g10, g10/g11, g5/g11}
                    \draw[->] (\f) -- (\t);

                    \path (x3) edge[->,bend left] (g4);
                    \path (g4) edge[->,bend left=20] (g9);
                \end{scope}
            \end{tikzpicture}
        \end{center}
        \caption{Two consecutive Full Adders (top left), the MDFA block (bottom left), an~optimal circuit for $\SUM_5$ (top right) whose highlighted part computes~MDFA, and a~dot diagram for MDFA.}
        \label{figure:mdfa}
    \end{figure}

    We~also need a~block called MDFA' that can be~viewed as~a~subfunction of~MDFA:
    \[\operatorname{MDFA'}(x_1 \oplus x_2, x_2, x_4, x_4 \oplus x_5)=\operatorname{MDFA}(x_1 \oplus x_2, x_2, 1, x_4, x_4 \oplus x_5).\]
    It~is not difficult to~see that one can compute MDFA' using six gates: when one replaces $x_3$ by~one in~the circuit for MDFA,
    the two gates fed by~$x_3$ can be~eliminated.

    Using MDFA and MDFA' blocks, one can compute $\SUM_n$ roughly as~follows:
    \begin{enumerate}
        \item Compute $x_2 \oplus x_3, x_4 \oplus x_5, \dotsc, x_{n-1} \oplus x_n$ ($n/2$ gates).
        \item Apply at~most~$n/2$ $\MDFA$ blocks (no~more than $4n$~gates).
        \item The last MDFA block outputs two bits: $a$~and~$a\oplus b$. Instead of~them, one needs to~compute $a \oplus b$ and $a \land b$. To~achieve this,
        it~suffices to apply $x>y=(x \land \overline{y})$ operation:
        \(a \land b = a>(a \oplus b)\).
    \end{enumerate}
    This gives an~upper bound $4.5n$ for $\SUM_n$, its formal proof can
    be~found in~\cite{DBLP:journals/ipl/DemenkovKKY10}.
    Figure~\ref{figure:sum17mdfa} gives an~example of~the corresponding design
    for~$\SUM_{16}$.

    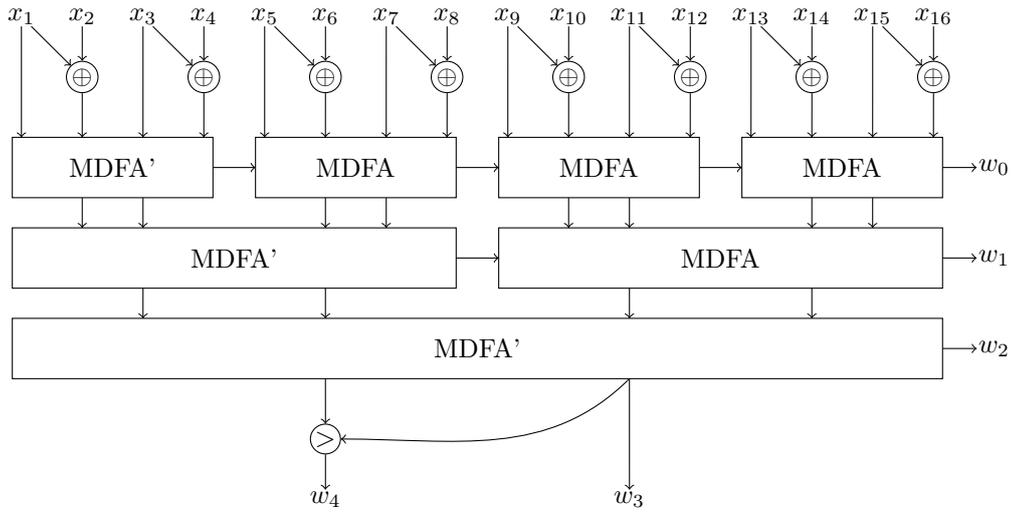
\begin{figure}[ht]
        \begin{center}
            \begin{tikzpicture}[scale=0.8]
                \foreach \n in {2,...,17} {
                    \tikzmath{\j=int(\n-1);}
                    \node[input] (\n) at (\n,6) {$x_{\j}$};
                }
                \foreach \i in {1,...,8} {
                    \tikzmath{\k=int(2*\i); \j=int(2*\i+1);}
                    \node[gate] (a) at (\j,5) {$\oplus$};
                    \draw[->] (\k) -- (a); \draw[->] (\j) -- (a);
                    \draw[->] (a) -- (\j,4); \draw[->] (\k) -- (\k,4);
                }
                \foreach \x/\y/\w/\t in {2/4/3/MDFA', 6/4/3/MDFA, 10/4/3/MDFA, 14/4/3/MDFA, 2/2.5/7/MDFA', 10/2.5/7/MDFA, 2/1/15/MDFA'} {
                    \draw (\x-0.15,\y) rectangle (\x+\w+0.15,\y-1);
                    \node at (\x+\w/2,\y-0.5) {\t};
                }

                \node[input] (c) at (18,3.5) {$w_0$};
                \draw[->] (17.15,3.5) -- (c);

                \node[input] (w_1) at (18, 2) {$w_1$}; \draw[->] (17.15, 2) -- (w_1);
                \node[input] (w_2) at (18, 0.5) {$w_2$}; \draw[->] (17.15, 0.5) -- (w_2);

                \foreach \x in {3, 4, 7, 8, 11, 12, 15, 16}
                \draw[->] (\x,3) -- (\x,2.5);
                \foreach \x in {4, 7, 12, 15}
                \draw[->] (\x,1.5) -- (\x,1);
                \foreach \x/\y in {5/3.5, 9/3.5, 13/3.5, 9/2}
                \draw[->] (\x+0.15,\y) -- (\x+0.85,\y);

                \node[input] (w3) at (12,-2) {$w_3$};
                \draw[->] (12,0) -- (w3);
                \node[gate] (x) at (7,-1) {$>$};
                \draw[->] (7,0) -- (x);
                \node[input] (w4) at (7,-2) {$w_4$};
                \draw[->] (x) -- (w4);
                \path (12,0) edge[->,out=-135,in=0] (x);
            \end{tikzpicture}
        \end{center}
        \caption{A~circuit computing $\SUM_{16}$ composed out~of eight $\oplus$-gates at~the top, three MDFA' blocks, four MDFA blocks, and one final gate. Its size is $8+3 \cdot 6 + 4 \cdot 8 + 1=59$.}
        \label{figure:sum17mdfa}
    \end{figure}

    \section{New Upper Bound for Circuit Size of~Bit Addition}
    In~this section, we~prove a~new upper bound $4.5n-2m$ for the circuit size
    of~bit addition. For regimes where $m$~is small compared to~$n$, this~is
    better than $5n-3m$ by~about $10\%$. This applies to~$\MULT_n$ and $\SUM_n$.
    \begin{theorem}
        \label{theorem:main}
        For any vector $s \in \mathbb{Z}_{\ge 0}^n$,
        \[\operatorname{size}(\BA_n^s) \le 4.5n-2m.\]
    \end{theorem}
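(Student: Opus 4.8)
The plan is to generalize the $\MDFA$-based construction of Demenkov et~al.\ from $\SUM_n$ to an arbitrary significance vector, and then to bound the size by an amortized accounting that charges $4.5$ to each of the $n$ input bits and refunds $2$ for each of the $m$ output bits. First I would process the significance layers from least to most significant, exactly as in the Napier--Dadda scheme, but replace the Full-Adder-based compression of each layer by a chain of $\MDFA$ blocks headed by a single $\MDFA'$ block. The relevant costs and effects are: an XOR pairs two bits of the same layer into the encoding $(p, p\oplus q)$ at cost~$1$; an $\MDFA$ consumes two paired pairs and one free bit at layer~$i$ and emits one free bit at layer~$i$ together with a paired pair at layer~$i+1$, reducing the total bit count by~$2$ at cost~$8$; and an $\MDFA'$ reduces the count by~$1$ at cost~$6$. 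The feature that makes the whole scheme work is that the pair emitted by an $\MDFA$ at layer~$i$ is produced already in the $(p,p\oplus q)$ form, so the chain processing layer~$i+1$ can consume it without spending any further XOR on it.

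The accounting then splits the total size into three parts. First, the setup XORs are used only to pair up the $n$ \emph{original} input bits; summing $\lfloor d_i/2\rfloor$ over the layers (where $d_i$ is the number of inputs of significance~$i$, and $\sum_i d_i = n$) gives at most $n/2$, while every pair created during the computation is a carry and is produced already paired, hence free. Second, each $\MDFA$ costs exactly $8=4\cdot 2$, i.e.\ $4$ per unit of reduction, and the reductions of all $\MDFA$ and $\MDFA'$ blocks sum to the total reduction $n-m$, so charging $4$ per unit accounts for $4(n-m)$. Third, each $\MDFA'$ costs $6=4\cdot 1+2$, i.e.\ it carries an overhead of $2$ beyond the ideal $4$ per reduction, and exactly one $\MDFA'$ heads each compression chain.

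To close the bound I would show that the number of compression chains is at most~$m$. A chain that compresses layer~$i$ leaves exactly one surviving bit there (the honest parity of the bits of that significance, which is a nonconstant function of the inputs and hence a genuine output), so distinct chains leave bits at distinct output layers; there are $m$ output layers, whence at most $m$ chains and total boundary overhead at most $2m$. Summing the three parts yields $\operatorname{size}(\BA_n^s)\le n/2+4(n-m)+2m=4.5n-2m$.

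The hard part will be the boundary bookkeeping hidden in the first and third parts: I must show that, for an arbitrary significance vector, the paired encoding can always be propagated across layers of wildly varying width, and that every bit of boundary cost can be absorbed into the $2m$ refund rather than into the $4.5n$ term. Concretely, this means a case analysis on the number of bits remaining at a layer (as illustrated by the $\SUM_5$ and $\SUM_{16}$ examples), verifying in each case that a new XOR is charged only to an original input bit, that the leftover free bit of an odd layer can serve as the single input of the head block (turning its $\MDFA'$ into an $\MDFA$ and thus \emph{saving} overhead), and that the final gate converting the topmost paired output $(a,a\oplus b)$ into the genuine bits $(a\oplus b,\,a\wedge b)$ fits within the same budget. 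Degenerate layers already carrying a single bit cost nothing and only make the inequality slacker; the delicate step is confirming that no configuration of small or ragged layers forces more than $2$ units of overhead per output layer.
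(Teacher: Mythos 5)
Your proposal follows essentially the same route as the paper: an initial $n/2$-gate pairing of the input bits at each layer, layer-by-layer compression with chains of $\MDFA$ blocks (headed by an $\MDFA'$ or by a leftover free bit), and the accounting $n/2+4(n-m)+2m$, where the "at most $m$ chains" claim corresponds to the paper's observation that each processed layer leaves exactly one (non-constant, hence genuine) output bit that is then shifted away. The "boundary bookkeeping" you defer is precisely the content of the paper's proof --- an induction with seven cases on $l$, the number of bits in the lowest layer ($l=1,2,3$ and $l\equiv 0,1,2,3 \pmod 4$), using an $\land$, a Full Adder, or an extra carry-pairing $\oplus$ where needed --- and in every case the cost indeed stays within your budget of $4$ gates per unit of bit reduction plus at most $2$ overhead gates per output layer.
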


    In~the proof, we~use the following straightforward observation.
    Assume that $s_1<s_2,s_3, \dotsc, s_n$.
    In~this case, the first output is~equal to~$x_1$,
    the cost of~computing this particular bit of~the output is~zero,
    allowing one to~forget about~it. Thus,
    \[\operatorname{size}(\BA^s_n)=\operatorname{size}(\BA^{s'}_{n-1}),\]
    where $s'=(s_2,\dotsc, s_n)$. We~call the operation
    of~replacing~$s$ by~$s'$
    as~\emph{shifting}. Note that shifting reduces both the number of~inputs and the number of~outputs by~one. Figure~\ref{figure:shifting} gives an~example.

    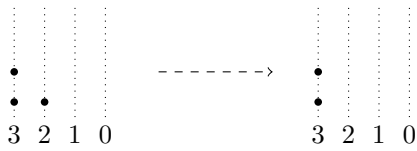
\begin{figure}[ht]
        \begin{center}
            \begin{tikzpicture}
                \begin{scope}
                    \foreach \x [count=\n from 0] in {2, 1, 0, -1} {
                        \draw[l] (\x * \d, -1) -- (\x * \d, .5);
                        \node[below] at (\x * \d, -1) {$\n$};
                    }

                    \foreach \x/\y in {-1/-1, -1/0, 0/-1}
                    	\node[dot] at (\x * \d, \y * \d - \d) {};
                \end{scope}

                \draw[dashed, ->] (1.5, -\d) -- (3, -\d);

                \begin{scope}[xshift=40mm]
                    \foreach \x [count=\n from 0] in {2, 1, 0, -1} {
						\draw[l] (\x * \d, -1) -- (\x * \d, .5);
						\node[below] at (\x * \d, -1) {$\n$};
					}

					\foreach \x/\y in {-1/-1, -1/0}
						\node[dot] at (\x * \d, \y * \d - \d) {};
				\end{scope}
            \end{tikzpicture}
        \end{center}
        \caption{Shifting: $\size(\BA_3^{2, 3, 3})=\size(\BA_2^{3, 3})$. In~turn, $\BA_2^{3, 3}$ can be~computed by the Half Adder. Thus, $\BA_3^{2, 3, 3}(x_1, x_2, x_3)=(x_1, x_2 \oplus x_3, x_2 \land x_3)$.}
        \label{figure:shifting}
    \end{figure}

    \begin{proof}
        As~the first step, we~do the following: at~every significance layer,
        we~break all bits, except for possibly one, into pairs and compute
        the parity for every pair. This takes at~most $n/2$ gates.

        \begin{center}
            \begin{tikzpicture}
                \foreach \x [count=\n from 0] in {2, 1, ..., -2} {
                    \draw[l] (\x * \d, -1) -- (\x * \d, 1);
                }

                \foreach [count=\n] \x/\y in {-2/-1, -2/0, -2/1, -2/2, -2/3, -1/-1, -1/0, -1/1, 0/-1, 1/-1, 1/0, 1/1, 1/2, 2/-1, 2/0}
                	\node[dot] (\n) at (\x * \d, \y * \d - \d) {};

                \foreach \i/\j in {1/2, 3/4, 6/7, 10/11, 12/13, 14/15}
                	\node[fit=(\i) (\j), pair] {};
            \end{tikzpicture}
        \end{center}

        Then, it~remains to~prove that one can compute the sum of~$n$ bits
        using $4n-2m$ gates if~every significance layer contains at~most one bit
        without a~pair. We~prove this by~induction on~$n$. The base case $n=1$ is~clear: in~this case, the circuit size is~zero (nothing needs to~be summed~up) and the upper bound is~at~least zero since $m \le n$.
        To~prove the induction step,
        denote by~$l$ the number
        of~minimum elements in~the significance vector~$s$
        (that~is, the number of~bits in~the rightmost non-empty column in~dot-notation).

        Consider the following seven cases. (In~fact, the first three cases are special cases of~the last three cases, but we~believe that the presentation is~cleaner when they are stated
        as~separate cases.)
        In~each of~the cases below, we~shift and proceed by~induction.

        \begin{enumerate}
            \item $l=1$. In~this case, we~just shift.
            By~the induction hypothesis, the rest can be~computed by~a~circuit
            of~size at~most
            \[4(n-1)-2(m-1)=4n-2m-2<4n-2m.\]

            \item $l=2$. Then, the corresponding two bits~$x_1$ and~~$x_2$ are paired meaning that their sum $x_1 \oplus x_2$ is~computed already.
            Then, we~compute their carry
            \[c=x_1 > (x_1 \oplus x_2)=x_1 \land x_2\]
            and transfer~it to~the next layer.
            If~this layer has an~unpaired bit~$b$, we~pair $b$~and~$c$
            by~computing $b \oplus c$. Finally, we~shift.
            By~the induction hypothesis, the size of~the resulting circuit is~at~most
            \[1+1+4(n-1)-2(m-1)=4n-2m.\]

            \begin{center}
                \begin{tikzpicture}
                    \begin{scope}
                        \foreach \x [count=\n from 0] in {2, 1} {
                            \draw[l] (\x * \d, -1) -- (\x * \d, 0.5);
                        }

                        \foreach [count=\n] \x/\y in {2/-1, 2/0, 1/-1, 1/0, 1/1}
                        \node[dot] (\n) at (\x * \d, \y * \d - \d) {};

                        \foreach \i/\j in {1/2, 3/4}
                        \node[fit=(\i) (\j), pair] {};
                    \end{scope}

                    \begin{scope}[xshift=20mm]
                        \foreach \x [count=\n from 0] in {2, 1} {
                            \draw[l] (\x * \d, -1) -- (\x * \d, 0.5);
                        }

                        \foreach [count=\n] \x/\y in {2/-1, 1/-1, 1/0, 1/1, 1/2}
                        \node[dot] (\n) at (\x * \d, \y * \d - \d) {};

                        \foreach \i/\j in {2/3}
                        \node[fit=(\i) (\j), pair] {};
                    \end{scope}

                    \begin{scope}[xshift=40mm]
                        \foreach \x [count=\n from 0] in {2, 1} {
                            \draw[l] (\x * \d, -1) -- (\x * \d, 0.5);
                        }

                        \foreach [count=\n] \x/\y in {2/-1, 1/-1, 1/0, 1/1, 1/2}
                        \node[dot] (\n) at (\x * \d, \y * \d - \d) {};

                        \foreach \i/\j in {2/3, 4/5}
                        \node[fit=(\i) (\j), pair] {};
                    \end{scope}

                    \begin{scope}[xshift=60mm]
                        \foreach \x [count=\n from 0] in {2} {
                            \draw[l] (\x * \d, -1) -- (\x * \d, 0.5);
                        }

                        \foreach [count=\n] \x/\y in {2/-1, 2/0, 2/1, 2/2}
                        \node[dot] (\n) at (\x * \d, \y * \d - \d) {};

                        \foreach \i/\j in {1/2, 3/4}
                        \node[fit=(\i) (\j), pair] {};
                    \end{scope}

                    \path (1.2, -0.5) edge[->, dashed] node[below] {$\land$} (2, -0.5);
                    \path (3.2, -0.5) edge[->, dashed] node[below] {$\oplus$} (4, -0.5);
                    \path (5.2, -0.5) edge[->, dashed] node[below] {shift} (6, -0.5);
                \end{tikzpicture}
            \end{center}

            \item $l=3$. For the corresponding three bits $x_1,x_2,x_3$,
            we~have $x_1 \oplus x_2$, $x_2$, and $x_3$ (that~is, $x_1$~and~$x_2$ are paired). We~apply the Full Adder to~the three bits. This costs four gates, as~$x_1 \oplus x_2$ is~already computed and $x_1$ is~not needed
            (recall Figure~\ref{figure:sum23}). The sum bit stays on~the same layer, whereas the carry bit~$c$ goes to~the next layer. Then, we~pair~$c$
            with an~unpaired bit on~the next layer if~needed and shift. This gives an~upper bound
            \[4+1+4(n-2)-2(m-1)<4n-2m.\]

            \begin{center}
                \begin{tikzpicture}
                    \begin{scope}
                        \foreach \x [count=\n from 0] in {2, 1} {
                            \draw[l] (\x * \d, -1) -- (\x * \d, 0.5);
                        }

                        \foreach [count=\n] \x/\y in {2/-1, 2/0, 2/1, 1/-1, 1/0, 1/1}
                        \node[dot] (\n) at (\x * \d, \y * \d - \d) {};

                        \foreach \i/\j in {1/2, 4/5}
                        \node[fit=(\i) (\j), pair] {};
                    \end{scope}

                    \begin{scope}[xshift=20mm]
                        \foreach \x [count=\n from 0] in {2, 1} {
                            \draw[l] (\x * \d, -1) -- (\x * \d, 0.5);
                        }

                        \foreach [count=\n] \x/\y in {2/-1, 1/-1, 1/0, 1/1, 1/2}
                        \node[dot] (\n) at (\x * \d, \y * \d - \d) {};

                        \foreach \i/\j in {2/3}
                        \node[fit=(\i) (\j), pair] {};
                    \end{scope}

                    \begin{scope}[xshift=40mm]
                        \foreach \x [count=\n from 0] in {2, 1} {
                            \draw[l] (\x * \d, -1) -- (\x * \d, 0.5);
                        }

                        \foreach [count=\n] \x/\y in {2/-1, 1/-1, 1/0, 1/1, 1/2}
                        \node[dot] (\n) at (\x * \d, \y * \d - \d) {};

                        \foreach \i/\j in {2/3, 4/5}
                        \node[fit=(\i) (\j), pair] {};
                    \end{scope}

                    \begin{scope}[xshift=60mm]
                        \foreach \x [count=\n from 0] in {2} {
                            \draw[l] (\x * \d, -1) -- (\x * \d, 0.5);
                        }

                        \foreach [count=\n] \x/\y in {2/-1, 2/0, 2/1, 2/2}
                        \node[dot] (\n) at (\x * \d, \y * \d - \d) {};

                        \foreach \i/\j in {1/2, 3/4}
                        \node[fit=(\i) (\j), pair] {};
                    \end{scope}

                    \path (1.2, -0.5) edge[->, dashed] node[below] {FA} (2, -0.5);
                    \path (3.2, -0.5) edge[->, dashed] node[below] {$\oplus$} (4, -0.5);
                    \path (5.2, -0.5) edge[->, dashed] node[below] {shift} (6, -0.5);
                \end{tikzpicture}
            \end{center}

            \item $l=4k$. Apply MDFA' to~two pairs to~produce an~unpaired bit.
            For the remaining $2k-2$ pairs, keep applying MDFA, each time reusing the unpaired bit. Then, we~shift. The upper bound~is
            \[6+8(k-1)+4(n-2k)-2(m-1)=4n-2m.\]

            \begin{center}
                \begin{tikzpicture}
                    \begin{scope}
                        \foreach \x [count=\n from 0] in {2, 1} {
                            \draw[l] (\x * \d, -1) -- (\x * \d, 2.5);
                        }

                        \foreach [count=\n] \x/\y in {2/-1, 2/0, 2/1, 2/2, 2/3, 2/4, 2/5, 2/6, 1/-1, 1/0}
                        \node[dot] (\n) at (\x * \d, \y * \d - \d) {};

                        \foreach \i/\j in {1/2, 3/4, 5/6, 7/8, 9/10}
                        \node[fit=(\i) (\j), pair] {};
                    \end{scope}

                    \begin{scope}[xshift=20mm]
                        \foreach \x [count=\n from 0] in {2, 1} {
                            \draw[l] (\x * \d, -1) -- (\x * \d, 2.5);
                        }

                        \foreach [count=\n] \x/\y in {2/-1, 2/0, 2/1, 2/2, 1/-1, 1/0, 1/1, 1/2, 2/3}
                        \node[dot] (\n) at (\x * \d, \y * \d - \d) {};

                        \foreach \i/\j in {1/2, 3/4, 5/6, 7/8}
                        \node[fit=(\i) (\j), pair] {};
                    \end{scope}

                    \begin{scope}[xshift=40mm]
                        \foreach \x [count=\n from 0] in {2, 1} {
                            \draw[l] (\x * \d, -1) -- (\x * \d, 2.5);
                        }

                        \foreach [count=\n] \x/\y in {1/-1, 1/0, 1/1, 1/2, 1/3, 1/4, 2/-1}
                        \node[dot] (\n) at (\x * \d, \y * \d - \d) {};

                        \foreach \i/\j in {1/2, 3/4, 5/6}
                        \node[fit=(\i) (\j), pair] {};
                    \end{scope}

                    \begin{scope}[xshift=60mm]
                        \foreach \x [count=\n from 0] in {2} {
                            \draw[l] (\x * \d, -1) -- (\x * \d, 2.5);
                        }

                        \foreach [count=\n] \x/\y in {2/-1, 2/0, 2/1, 2/2, 2/3, 2/4}
                        \node[dot] (\n) at (\x * \d, \y * \d - \d) {};

                        \foreach \i/\j in {1/2, 3/4, 5/6}
                        \node[fit=(\i) (\j), pair] {};
                    \end{scope}

                    \path (1.2, -0.5) edge[->, dashed] node[below] {MDFA'} (2, -0.5);
                    \path (3.2, -0.5) edge[->, dashed] node[below, text width=20mm, align=center] {MDFA\\ $k-1$} (4, -0.5);
                    \path (5.2, -0.5) edge[->, dashed] node[below] {shift} (6, -0.5);
                \end{tikzpicture}
            \end{center}

            \item $l=4k+1$. Apply MDFA $k$~times, then shift. The upper bound~is
            \[8k+4(n-2k-1)-2(m-1)<4n-2m.\]

            \begin{center}
                \begin{tikzpicture}
                    \begin{scope}
                        \foreach \x [count=\n from 0] in {2, 1} {
                            \draw[l] (\x * \d, -1) -- (\x * \d, 2.5);
                        }

                        \foreach [count=\n] \x/\y in {2/-1, 2/0, 2/1, 2/2, 2/3, 2/4, 2/5, 2/6, 2/7, 1/-1, 1/0}
                        \node[dot] (\n) at (\x * \d, \y * \d - \d) {};

                        \foreach \i/\j in {1/2, 3/4, 5/6, 7/8, 10/11}
                        \node[fit=(\i) (\j), pair] {};
                    \end{scope}

                    \begin{scope}[xshift=20mm]
                        \foreach \x [count=\n from 0] in {2, 1} {
                            \draw[l] (\x * \d, -1) -- (\x * \d, 2.5);
                        }

                        \foreach [count=\n] \x/\y in {1/-1, 1/0, 1/1, 1/2, 1/3, 1/4, 2/-1}
                        \node[dot] (\n) at (\x * \d, \y * \d - \d) {};

                        \foreach \i/\j in {1/2, 3/4, 5/6}
                        \node[fit=(\i) (\j), pair] {};
                    \end{scope}

                    \begin{scope}[xshift=40mm]
                        \foreach \x [count=\n from 0] in {2} {
                            \draw[l] (\x * \d, -1) -- (\x * \d, 2.5);
                        }

                        \foreach [count=\n] \x/\y in {2/-1, 2/0, 2/1, 2/2, 2/3, 2/4}
                        \node[dot] (\n) at (\x * \d, \y * \d - \d) {};

                        \foreach \i/\j in {1/2, 3/4, 5/6}
                        \node[fit=(\i) (\j), pair] {};
                    \end{scope}

                    \path (1.2, -0.5) edge[->, dashed] node[below, text width=20mm, align=center] {MDFA\\$k$} (2, -0.5);
                    \path (3.2, -0.5) edge[->, dashed] node[below] {shift} (4, -0.5);
                \end{tikzpicture}
            \end{center}

            \item $l=4k+2$. Compute an~$\land$ of~two bits from the same pair:
            this leaves their sum at~the current layer and puts the just computed
            carry bit to~the next layer. If~needed, compute the parity of~an~unpaired
            bit with the just transferred carry bit. Then, apply MDFA $k$~times and shift.
            Overall, the upper bound~is
            \[1+1+8k+4(n-2k-1)-2(m-1)=4n-2m.\]

            \begin{center}
                \begin{tikzpicture}
                    \begin{scope}
                        \foreach \x [count=\n from 0] in {2, 1} {
                            \draw[l] (\x * \d, -1) -- (\x * \d, 1.5);
                        }

                        \foreach [count=\n] \x/\y in {2/-1, 2/0, 2/1, 2/2, 2/3, 2/4, 1/-1, 1/0, 1/1}
                        \node[dot] (\n) at (\x * \d, \y * \d - \d) {};

                        \foreach \i/\j in {1/2, 3/4, 5/6, 7/8}
                        \node[fit=(\i) (\j), pair] {};
                    \end{scope}

                    \begin{scope}[xshift=20mm]
                        \foreach \x [count=\n from 0] in {2, 1} {
                            \draw[l] (\x * \d, -1) -- (\x * \d, 1.5);
                        }

                        \foreach [count=\n] \x/\y in {2/-1, 2/0, 2/1, 2/2, 1/-1, 1/0, 1/1, 1/2, 2/3}
                        \node[dot] (\n) at (\x * \d, \y * \d - \d) {};

                        \foreach \i/\j in {1/2, 3/4, 5/6, 7/8}
                        \node[fit=(\i) (\j), pair] {};
                    \end{scope}

                    \begin{scope}[xshift=40mm]
                        \foreach \x [count=\n from 0] in {2, 1} {
                            \draw[l] (\x * \d, -1) -- (\x * \d, 1.5);
                        }

                        \foreach [count=\n] \x/\y in {1/-1, 1/0, 1/1, 1/2, 1/3, 1/4, 2/-1}
                        \node[dot] (\n) at (\x * \d, \y * \d - \d) {};

                        \foreach \i/\j in {1/2, 3/4, 5/6}
                        \node[fit=(\i) (\j), pair] {};
                    \end{scope}

                    \begin{scope}[xshift=60mm]
                        \foreach \x [count=\n from 0] in {2} {
                            \draw[l] (\x * \d, -1) -- (\x * \d, 1.5);
                        }

                        \foreach [count=\n] \x/\y in {2/-1, 2/0, 2/1, 2/2, 2/3, 2/4}
                        \node[dot] (\n) at (\x * \d, \y * \d - \d) {};

                        \foreach \i/\j in {1/2, 3/4, 5/6}
                        \node[fit=(\i) (\j), pair] {};
                    \end{scope}

                    \path (1.2, -0.5) edge[->, dashed] node[below] {$\land$, $\oplus$} (2, -0.5);
                    \path (3.2, -0.5) edge[->, dashed] node[below, text width=20mm, align=center] {MDFA\\ $k$} (4, -0.5);
                    \path (5.2, -0.5) edge[->, dashed] node[below] {shift} (6, -0.5);
                \end{tikzpicture}
            \end{center}

            \item $l=4k+3$. Apply the Full Adder to~a~pair of~bits and the unpaired bit.
            If~needed, pair the just transferred carry bit with an~unpaired bit from
            the next layer. Then, apply MDFA $k$~times and shift. The resulting upper
            bound~is
            \[4+1+8k+4(n-2k-2)-2(m-1)<4n-2m.\]

            \begin{center}
                \begin{tikzpicture}
                    \begin{scope}
                        \foreach \x [count=\n from 0] in {2, 1} {
                            \draw[l] (\x * \d, -1) -- (\x * \d, 2);
                        }

                        \foreach [count=\n] \x/\y in {2/-1, 2/0, 2/1, 2/2, 2/3, 2/4, 1/-1, 1/0, 1/1, 2/5}
                        \node[dot] (\n) at (\x * \d, \y * \d - \d) {};

                        \foreach \i/\j in {1/2, 3/4, 5/6, 7/8}
                        \node[fit=(\i) (\j), pair] {};
                    \end{scope}

                    \begin{scope}[xshift=20mm]
                        \foreach \x [count=\n from 0] in {2, 1} {
                            \draw[l] (\x * \d, -1) -- (\x * \d, 2);
                        }

                        \foreach [count=\n] \x/\y in {2/-1, 2/0, 2/1, 2/2, 1/-1, 1/0, 1/1, 1/2, 2/3}
                        \node[dot] (\n) at (\x * \d, \y * \d - \d) {};

                        \foreach \i/\j in {1/2, 3/4, 5/6, 7/8}
                        \node[fit=(\i) (\j), pair] {};
                    \end{scope}

                    \begin{scope}[xshift=40mm]
                        \foreach \x [count=\n from 0] in {2, 1} {
                            \draw[l] (\x * \d, -1) -- (\x * \d, 2);
                        }

                        \foreach [count=\n] \x/\y in {1/-1, 1/0, 1/1, 1/2, 1/3, 1/4, 2/-1}
                        \node[dot] (\n) at (\x * \d, \y * \d - \d) {};

                        \foreach \i/\j in {1/2, 3/4, 5/6}
                        \node[fit=(\i) (\j), pair] {};
                    \end{scope}

                    \begin{scope}[xshift=60mm]
                        \foreach \x [count=\n from 0] in {2} {
                            \draw[l] (\x * \d, -1) -- (\x * \d, 2);
                        }

                        \foreach [count=\n] \x/\y in {2/-1, 2/0, 2/1, 2/2, 2/3, 2/4}
                        \node[dot] (\n) at (\x * \d, \y * \d - \d) {};

                        \foreach \i/\j in {1/2, 3/4, 5/6}
                        \node[fit=(\i) (\j), pair] {};
                    \end{scope}

                    \path (1.2, -0.5) edge[->, dashed] node[below] {FA, $\oplus$} (2, -0.5);
                    \path (3.2, -0.5) edge[->, dashed] node[below, text width=20mm, align=center] {MDFA\\ $k$} (4, -0.5);
                    \path (5.2, -0.5) edge[->, dashed] node[below] {shift} (6, -0.5);
                \end{tikzpicture}
            \end{center}
        \end{enumerate}
    \end{proof}

    \section{Lower Bounds and Limitations}
    The upper bound $\size(\BA_n^s) \le 5n-3m$ holds for any vector~$s$, but
    in~many scenarios it~is loose. For example, for the function
    \[\ADD_n=\BA_{2n}^{0,\dotsc,n-1,0,\dotsc,n-1},\]
    this upper bounds turns into $5\cdot 2n-3(n+1)=7n-3$,
    whereas $\size(\ADD_n)=5n-3$. Interestingly, the term $3m$ in~the upper bound $5n-3m$ cannot be~increased: for any constant $\alpha > 3$, there exists
    a~vector~$s$ such that $\size(\BA_n^s) \ge 5n-\alpha m-O(1)$.
    One example of~such a~vector is~$s=(0,0,1,\dotsc,n-1)$. The corresponding function $\BA_n^s$ adds a~bit to an~$n$-bit number. It~is not difficult to~see that it~can be~computed using $n$~Half Adders (see Figure~\ref{figure:addingbit}), hence its circuit size is~at~most~$2n$. Below, we~show that this straightforward circuit is~optimal (up~to an~additive constant). It~also shows that it~is
    impossible to~improve our upper bound $4.5n-2m$ to $4.5n-\beta m$ for $\beta > 2.5$.

    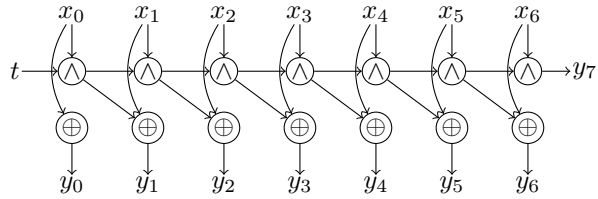
\begin{figure}
        \begin{center}
            \begin{tikzpicture}
                \foreach \i in {0, ..., 6} {
                    \node[input] (x\i) at (\i, 3) {$x_{\i}$};
                    \node[gate] (and\i) at (\i, 2.25) {$\land$};
                    \node[gate] (xor\i) at (\i, 1.5) {$\oplus$};
                    \node[input] (y\i) at (\i, 0.74) {$y_{\i}$};
                    \draw[->] (xor\i) -- (y\i);
                    \draw[->] (x\i) -- (and\i);
                    \path (x\i) edge[->, draw, bend right] (xor\i);
                }
                \foreach \i in {0,...,5} {
                    \tikzmath{\j=int(\i+1);}
                    \draw[->] (and\i) -- (and\j);
                    \draw[->] (and\i) -- (xor\j);
                }

                \node[input] (t) at (-0.75, 2.25) {$t$};
                \node[input] (y7) at (6.75, 2.25) {$y_7$};
                \draw[->] (t) -- (and0);
                \draw[->] (and6) -- (y7);
            \end{tikzpicture}
        \end{center}
        \caption{One can add a~bit~$t$ to a~$7$-bit integer $x_0\dotsb x_7$ (to~get
            an~$8$-bit integer $y_0\dotsb y_7$) using $14$~gates. A~straightforward
            generalization of~this construction ensures that $\size(\BA_n^{0,0,1,\dotsc, n-1}) \le 2n$.}
        \label{figure:addingbit}
    \end{figure}

    \begin{theorem}
        \label{theorem:lowerbound}
        \[\size(\BA_n^{0,0,1,\dotsc,n-1}) \ge 2n-O(1).\]
    \end{theorem}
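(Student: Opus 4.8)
The plan is to establish the matching lower bound by gate elimination, peeling off the most significant bit of the number. Throughout, write $f_n:=\BA_n^{0,0,1,\dots,n-1}$ for the function that adds a bit $t$ to the $n$-bit number $x_0\dots x_{n-1}$; it has inputs $t,x_0,\dots,x_{n-1}$ and outputs $y_i=x_i\oplus d_i$, where $d_0=t$ and $d_{i+1}=d_i\land x_i$. In particular $y_{n-1}=x_{n-1}\oplus c$ and $y_n=x_{n-1}\land c$, where $c:=t\land x_0\land\dots\land x_{n-2}$ is a nonconstant function that does not depend on $x_{n-1}$. I would prove the recurrence $\size(f_n)\ge\size(f_{n-1})+2$ and combine it with a constant-size base case to obtain $\size(f_n)\ge 2n-O(1)$, matching the $2n$ upper bound of Figure~\ref{figure:addingbit}.

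First I would take an optimal circuit $C$ for $f_n$ and substitute $x_{n-1}:=0$. This makes $y_n=x_{n-1}\land c$ identically zero, so, by the convention that identically-zero outputs are discarded, the surviving outputs $y_0,\dots,y_{n-2}$ together with $y_{n-1}=c$ compute precisely $f_{n-1}$ on the inputs $t,x_0,\dots,x_{n-2}$. Hence the restricted and simplified circuit is a circuit for $f_{n-1}$, and it remains to show that the substitution lets us delete at least two gates of $C$. For the two deletions I would use two independent sources. First, the output gate $g_n$ computing $y_n$ now computes the constant $0$ and can be deleted. Second, let $g$ be the topologically first gate fed by $x_{n-1}$; then its other input $w$ is independent of $x_{n-1}$, so $g=x_{n-1}\star w$, and $g|_{x_{n-1}=0}=0\star w$ equals a constant or $w$ itself for most gate types, making $g$ deletable as well. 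To guarantee that these are two \emph{distinct} gates I would invoke correctness: $x_{n-1}$ cannot have $g_n$ as its only successor, since then all dependence of $C$ on $x_{n-1}$ would pass through $g_n=x_{n-1}\land c$, which is insensitive to $x_{n-1}$ whenever $c=0$, contradicting $y_{n-1}=x_{n-1}\oplus c=x_{n-1}$ in that case.

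The main obstacle is the familiar subtlety of gate elimination over the full binary basis. Fixing $x_{n-1}:=0$ simplifies the first fed gate $g=x_{n-1}\star w$ when it is of AND-, OR-, or XOR-type, but an XNOR- or NOR-type gate can instead survive as the negation $\bar w$; and only the value $0$ (not $1$) collapses $y_n$, so I cannot freely switch the substituted value. The clean sub-cases are when $g$ is eliminated at $0$ (then $g$ together with the collapsed $g_n$ give two gates) and when $x_{n-1}$ has fan-out at least two (then a good first gate can be isolated). The delicate residual configuration is a single resistant connection. Here I would again use correctness: a NOR-type sole connection is impossible because it is insensitive to $x_{n-1}$ for some value of $w$, violating the fact that $y_{n-1}$ depends on $x_{n-1}$ for every setting of the other inputs; the truly stubborn case is an invertible XNOR-type sole connection, for which $x_{n-1}:=1$ turns $g$ into a copy of $w$ but yields only $\size(f_n)\ge\size(f_{n-1})+1$.

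Making this last case rigorous is the crux: I expect to recover the missing second deletion either by arguing that a sole XNOR-connection cannot occur in an optimal circuit (a local replacement pushing the negation downstream would save a gate), or by tracing a gate on a path from $g$ to $y_{n-1}$ that degenerates once $x_{n-1}$ is fixed. Once two deletions are secured in every step, the recurrence $\size(f_n)\ge\size(f_{n-1})+2$ follows, and the induction gives $\size(f_n)\ge 2n-O(1)$, so the straightforward construction of Figure~\ref{figure:addingbit} is optimal up to an additive constant.
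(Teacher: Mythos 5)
Your reduction is sound as far as it goes: substituting $x_{n-1}\gets 0$ does turn $f_n$ into $f_{n-1}$ once the collapsed output $y_n$ is discarded, and your observation that the sole successor of a fan-out-one $x_{n-1}$ must be of parity type (because $y_{n-1}=x_{n-1}\oplus c$ depends on $x_{n-1}$ under every setting of the other inputs) is correct. But the proof is not complete, and you say so yourself: in the fan-out-one, XNOR-type case the substitution $x_{n-1}\gets 0$ turns $g$ into $\neg w$, which can only be eliminated by pushing the negation into the successors of $g$ --- and that fails exactly when $g$ is itself an output gate (necessarily the one for $y_{n-1}$, since $y_n=x_{n-1}\land c$ is not of parity type in $x_{n-1}$); switching to $x_{n-1}\gets 1$ rescues $g$ but loses the collapse of $y_n$, as you note. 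Your two proposed repairs (``a local replacement pushing the negation downstream would save a gate'', ``tracing a gate on a path from $g$ to $y_{n-1}$ that degenerates'') are stated as expectations, not arguments, so as written the recurrence only gives $+1$ per step in the stubborn case and the induction delivers $n-O(1)$ rather than $2n-O(1)$.

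The paper sidesteps this entire case analysis with a different setup. It first fixes the added bit to $1$, so the target becomes the increment function $\INC_n$, and it exploits the fact that $\INC_n$ is self-reducible under setting \emph{any} variable $x_i\gets 1$: one output becomes redundant and the remaining ones compute $\INC_{n-1}$ on the surviving variables. This freedom to choose which variable to substitute is the key. Taking the topologically first gate $A(x_i,x_j)$ of an optimal circuit, if both $x_i$ and $x_j$ had fan-out one, the circuit would see the pair only through the single bit $A(x_i,x_j)$ and could not distinguish all four assignments to $(x_i,x_j)$, which $\INC_n$ does; hence one of them feeds at least two gates, and setting that variable to $1$ eliminates two gates with no reasoning about gate types and no special role for any particular output. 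If you want to keep your MSB-peeling route you must close the XNOR-at-the-output corner case rigorously; the cleaner fix is to adopt the paper's order of operations --- restrict to $\INC_n$ first, then let the circuit, rather than the function, dictate which variable to kill.
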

    \begin{proof}
        Assume that a~bit to~be added is~equal to~one (clearly, this only makes the function easier to~compute). In~other words,
        we~consider the \emph{increment} function $\INC_n \colon \{0,1\}^n \to \{0,1\}^{n+1}$. Thus, $\INC_n(x_0, \dotsc, x_{n-1})=(y_0, \dotsc, y_n)$ such that
        \[1+\sum_{i=0}^{n-1}2^ix_i=\sum_{i=0}^n2^iy_i.\]
        It~is not difficult to~write down explicit formulas for all output bits of~$\INC_n$. For example, for $n=5$, they are expressed as~follows:
        \begin{align*}
            y_0 &= 1 \oplus x_0\\
            y_1 &= x_0 \oplus x_1\\
            y_2 &= x_0x_1 \oplus x_2\\
            y_3 &= x_0x_1x_2 \oplus x_3\\
            y_4 &= x_0x_1x_2x_3 \oplus x_4\\
            y_5 &= x_0x_1x_2x_3x_4\\
        \end{align*}

        We~prove that $\size(\INC_n) \ge 2n-2$ by~induction on~$n$.
        The base case $n=1$ is~clear. For the induction step, take an~optimal circuit computing $\INC_n$ and consider its (topologically) first gate~$A(x_i, x_j)$.

        Now, if~both the variables $x_i$~and~$x_j$ had out-degree one, the whole circuit would depend on~$x_i$ and~$x_j$ through the gate~$A$ only. This would mean that there are two different pairs of~constant $(a_i, a_j), (b_i, b_j) \in \{0,1\}^2$ such that $A(a_i, a_j)=A(b_i, b_j)$. This, in~turn, would mean that the circuit does not distinguish between assignments
        \[x_i \gets a_i,\, x_j \gets a_j \,\text{ and }\, x_i \gets b_i,\, x_j \gets b_j.\]
        But such a~circuit cannot compute the function $\INC_n$ as~$\INC_n$ clearly distinguishes all four different assignments to~$x_i$ and~$x_j$.

        Thus, assume that, say, $x_i$ feeds at~least two gates. Then, assign $x_i \gets 1$ and simplify the circuit. During the simplification, the gates fed by~$x_i$ are eliminated. Also, the resulting circuit computes $\INC_{n-1}$. To~see this,
        it~is instructive
        to~get back to~the previous toy example where $n=5$. Say, we assign $x_2 \gets 1$. Then, the outputs are simplified as~follows:
        \begin{align*}
            y_0 &= 1 \oplus x_0\\
            y_1 &= x_0 \oplus x_1\\
            y_2 &= x_0x_1 \oplus 1\\
            y_3 &= x_0x_1 \oplus x_3\\
            y_4 &= x_0x_1x_3 \oplus x_4\\
            y_5 &= x_0x_1x_3x_4\\
        \end{align*}
        By~ignoring the output~$y_2$, one gets a~function computing $\INC_4$:
        \[(y_0,y_1,y_3,y_4,y_5)=\INC_4(x_0, x_1, x_3, x_4)\,.\]

        By~the induction hypothesis, the simplified circuit contains at~least $2(n-1)-2=2n-4$ gates. Thus, the original circuit has size at~least $2+2n-4=2n-2$.
    \end{proof}

    \section{Implementation and Experimental Evaluation}

    We~implemented efficient generators of~our new circuits in~the \texttt{Cirbo} open-source framework~\cite{DBLP:conf/aaai/AverkovBEGKKKLL25}.
    To~generate a~circuit computing $\BA_n^s$, one passes the vector~$s$. Listing~\ref{listing:first} shows how to~use the generator
    to~produce an~efficient circuit computing $\SUM_{31}$ in~a~single line of~code.
    When the circuit is~generated, one can use a~wide range of~\texttt{Cirbo} methods to~analyze and manipulate the circuit.

    \begin{lstlisting}[
        caption={Generating an~efficient circuit for $\SUM_{31}$ (that computes the binary representation of~the sum of~$31$ bits). The code also prints the size
        of~the resulting circuit and draws~it.},
        label=listing:first,
        float=h,
        %abovecaptionskip=-\medskipamount
    ]
from cirbo.synthesis.generation.arithmetics.summation
    import generate_add_weighted_bits_efficient

ckt = generate_add_weighted_bits_efficient([0] * 31)
print(ckt.gates_number())
ckt.view_graph()
    \end{lstlisting}

    \subsection{Adding Bits and Integers}
    Table~\ref{table:first} compares the size of~circuits for $\SUM_n$
    composed out~of Full Adders with circuits composed out~of MDFA blocks
    (that can be~generated using our new method), for various~$n$. As~the table reveals,
    for large values of~$n$, the latter circuits are about $10\%$ smaller than the former ones. Also, Listing~\ref{listing:add} ensures that for the addition of~two $n$-bit integers the generator produces circuits of~size $5n-3$ (recall that $\ADD_n=\BA_{2n}^{0,\dotsc,n-1,0,\dotsc,n-1}$ and that $5n-3$ is~provably optimal circuit size for this function).

    \begin{table}[ht]
        \caption{Comparing the size of~circuits for $\SUM_n$ composed out of~Full Adders with
            circuits composed out of~MDFA. The bottom row shows the improvement in~percents.}
        \label{table:first}
        \begin{center}
            \begin{tabular}{lrrrrrrrrr}
                \toprule
                $n$ & 7 & 31 & 127 & 511 & 2047 & 8191 & 32767 & 131071 \\
                \midrule
                FA & 20 & 130 & 600 & 2510 & 10180 & 40890 & 163760 & 655270 \\
                MDFA & 19 & 119 & 543 & 2263 & 9167 & 36807 & 147391 & 589751 \\
                Improvement  & 5.0\%  & 8.5\%  & 9.5\%  & 9.8\%  & 10.0\%  & 10.0\%  & 10.0\%  & 10.0\% \\
                \bottomrule
            \end{tabular}
        \end{center}
    \end{table}

        \begin{lstlisting}[
            caption={Ensuring that the generator produces circuits of~size $5n-3$ for $\ADD_n$.},
            label=listing:add,
            float=ht,
        ]
from cirbo.synthesis.generation.arithmetics.summation
    import generate_add_weighted_bits_efficient as generate

for n in range(2, 100):
    ckt = generate(list(range(n)) + list(range(n)))
    assert ckt.gates_number() == 5 * n - 3
        \end{lstlisting}

    \subsection{Multiplying Integers}

    Dadda's multiplier is~one of~the first circuit designs for multiplying $n$-bits integers. Basically, it~adds the partial products (conjunctions of~the bits of~the
    two input numbers) using Full Adders and Half Adders. Its size is~about $n^2+5n^2=6n^2$. Our method of~summing~up bits allows to~reduce the size
    to~roughly $5.5n^2$. An~asymptotically faster method of~multiplying
    $n$-bit integers was discovered by~Karatsuba~\cite{karatsuba}. It~is based
    on~the divide-and-conquer approach: to~multiply two $n$-bit integers,
    it~makes three recursive calls to~multiply two $n/2$-bit integers and
    then combines them using summation and subtraction only. This way, the running
    time~$T(n)$ of~the algorithm satisfies a~recurrence $T(n) \le 3T(n/2)+O(n)$,
    hence $T(n)=O(n^{\log_23})$. As~with many other algorithms based on~divide-and-conquer, when $n$~becomes small, it~is beneficial to~multiply
    the given numbers directly (rather than recursively).
    We~implemented generators based on~Karatsuba algorithm that
    use Dadda multipliers and MDFA-based bit addition when $n$~is smaller than~$20$.
    Table~\ref{table:multiplication} and Figure~\ref{figure:multiplication}
    compare the size of~the corresponding five circuit designs for $40 \le n \le 300$. More accurate estimates on~the circuits size of~multipliers
    are given by~Sergeev~\cite{Ser16}.

    \begin{table}[ht]
        \caption{Comparing the size of~circuits for $\MULT_n$. The first multiplier, Dadda, computes the sum of~the partial products using Full Adders and Half Adders. The second one, MDFA, sums~up the partial products using MDFA blocks.
        The third one, Karatsuba, makes three recursive calls and recurses all the way down to~4-bit numbers. The fourth and fifth multipliers use Karatsuba algorithm, but switch to~Dadda or~MDFA multipliers when $n$~becomes smaller than~$20$. The last row shows size improvement of~the fifth multiplier over the fourth one.}
        \label{table:multiplication}
        \begin{center}
            \begin{tabular}{lrrrrrrrrrrrrrrrr}
                \toprule
                $n$ & 40 & 80 & 120 & 160 & 200 & 240 & 280 \\
                \midrule
                Dadda & 9280 & 37760 & 85440 & 152320 & 238400 & 343680 & 468160 \\
                MDFA & 8539 & 34679 & 78419 & 139759 & 218699 & 315239 & 429379 \\
                Karatsuba & 11789 & 37836 & 72209 & 118152 & 168200 & 223093 & 293405 \\
                Karatsuba, Dadda& 7522 & 24770 & 49200 & 78598 & 113870 & 153948 & 199102 \\
                Karatsuba, MDFA& 7155 & 23642 & 46504 & 75168 & 108284 & 145787 & 189657 \\
                Improvement  & 4.9\%  & 4.6\%  & 5.5\%  & 4.4\%  & 4.9\%  & 5.3\%  & 4.7\% \\
                \bottomrule
            \end{tabular}
        \end{center}
    \end{table}

    \begin{figure}[ht]
        \includegraphics[width=\linewidth]{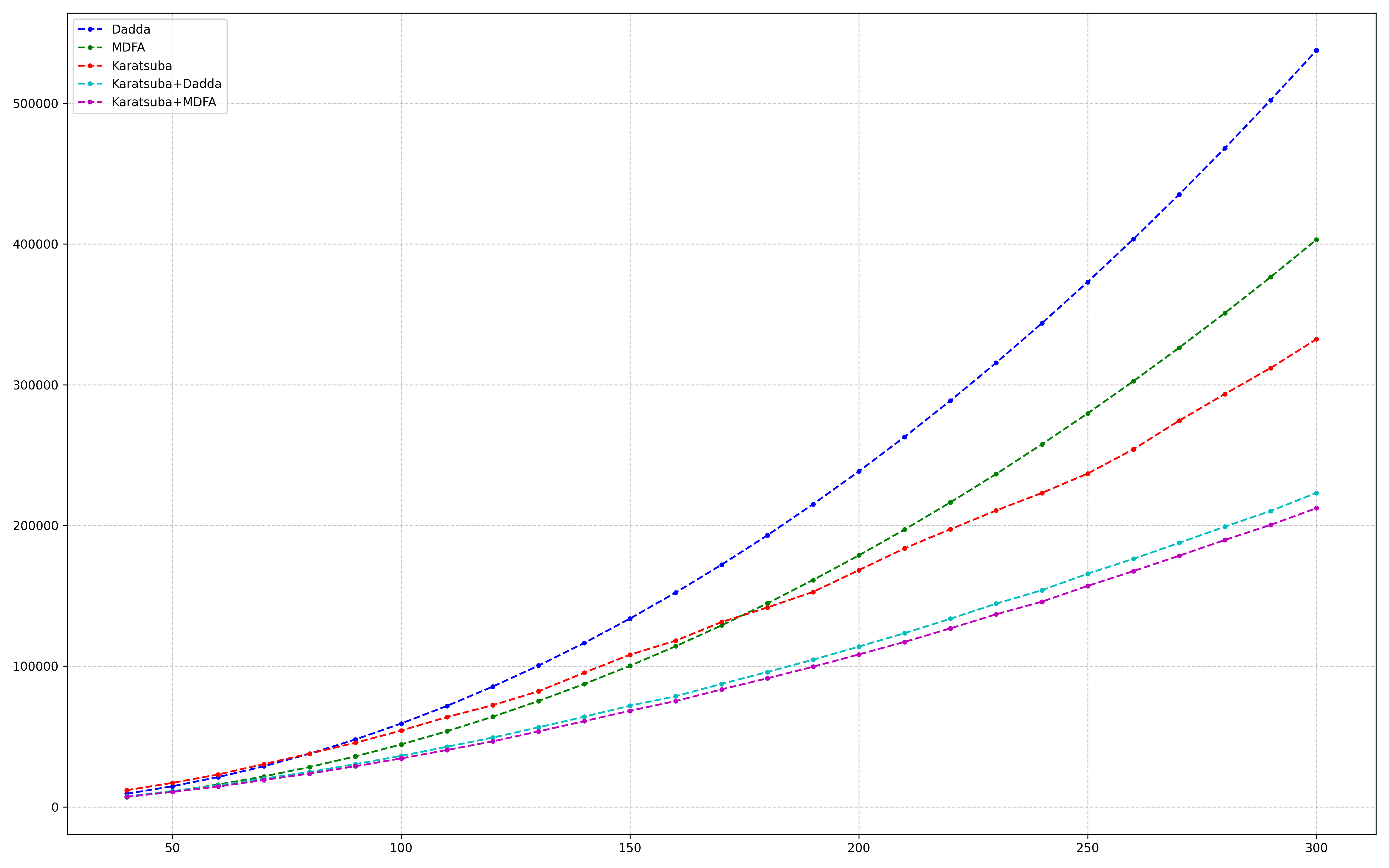}
        \caption{Comparing the size of~five circuit designs for $\MULT_n$, for $40 \le n \le 300$.}
        \label{figure:multiplication}
    \end{figure}

    \subsection{Logarithmic Depth}
    The depth of~most of~the circuits described above
    is~linear, that~is, $\Theta(n)$.
    With an~additional care, one can make the depth logarithmic ($\Theta(\log n)$) by~increasing the size slightly.
    To~achieve this, one processes the layers in~parallel rather than consecutively.
    Namely, let $h$~be the maximum height of~a~significance layer (that~is, every layer contains at~most $h$~bits). While $h > 3$, apply in~parallel as~many
    FA's as~possible to~every layer. After one such step, the maximum height becomes at~most $2h/3$ (to~simplify the exposition, we~ignore constant additive terms here): indeed, if~there are $k \le h$ bits on~the current layer, then about $k/3 \le h/3$ bits remain after the application of FA's; also, at~most $h/3$ bits are transferred from the next layer. Since the maximum height decreases geometrically, in~at most $O(\log n)$ steps, one reaches the case when $h \le 3$.
    This takes depth $O(\log n)$ and size $O(n)$ (since each FA reduces the total number of~bits by~one). When $h \le 3$, apply either HA~or~FA to~every layer. This ensures that every layer has at~most two bits, that~is, $h \le 2$ (the size of~the resulting circuit is~still linear and the depth is~still logarithmic).
    Then, everything boils down to~adding
    two $k$-bit numbers (with $k \le n$). This can
    be~performed using, for example,
    the Brent--Kung adder~\cite{BrentKung1982} that has size $O(k)$ and depth $O(\log k)$. By~using MDFA's instead of~FA's, one can further reduce the size of~the resulting circuits. Table~\ref{table:log} shows the size and the depth of the circuits generated this way for the three previously considered functions: $\SUM$, $\ADD$, and $\MULT$.

	\begin{table*}[ht]
		\caption{The size and the depth of~circuits computing $\SUM_n$, $\ADD_n$, and $\MULT_n$.}
		\label{table:log}
		\begin{center}
			\begin{tabular}{{lrrrrrrrrl}}
				\toprule
				$n$ & 10 & 20 & 30 & 40 & 60 & 80 & 160 & 320 \\
				\midrule
				\multirow{2}{*}{ADD}  & 15 & 18	& 23 & 24 & 28 & 31 & 32 & 42 & depth \\
				& 49 & 101 & 153 & 194 & 297 & 383 & 755 & 1526 & size \\
				\midrule
				\multirow{2}{*}{SUM}  & 10 & 14 & 16 & 18 & 20 & 22 & 26 & 30 & depth \\
				& 64 & 141 & 215 & 298 & 452 & 615 & 1252 & 2529 & size \\
				\midrule
				\multirow{2}{*}{MULT} & 29 & 39 & 46 & 51 & 55 & 62 & 70 & 81 & depth \\
				& 627 & 2301 & 5209 & 9158 & 20356 & 35971 & 142388 & 566539 & size \\
				\bottomrule
			\end{tabular}
		\end{center}
	\end{table*}

    \section{Conclusion and Further Directions}
    In this paper, we~presented smaller circuits for bit addition.
    In~many practically relevant scenarios, the described circuits
    are about 10\% smaller than the known circuits composed
    out~of Half Adders and Full Adders.
    Also, we~implemented generators that allow one
    to~produce the corresponding circuits using a~single line of~code
    via the \texttt{Cirbo} open-source package~\cite{DBLP:conf/aaai/AverkovBEGKKKLL25}.

    There are three natural open problems related to~the circuit size of~bit addition.
    \begin{enumerate}
        \item What is the largest $\alpha$ such that
        \[\size(\BA_n^s) \le 4.5n-\alpha m\]
        holds for every vector~$s$? In~this paper, we~proved that $\alpha \ge 2$.
        Theorem~\ref{theorem:main} shows that $\alpha \le 2.5$. An~example
        of~a~vector where our upper bound $4.5n-2m$ matches the size of~the circuit
        produced by~our algorithm~is
        \[s^*=\left(0,0,0,0,1,1,2,2,\dotsc,\frac{n}{2}-2, \frac{n}{2}-2\right).\]
        \begin{center}
            \begin{tikzpicture}
                \foreach \x [count=\n from 0] in {2, 1, ..., -6} {
                    \draw[l] (\x * \d, -1) -- (\x * \d, 0.75);
                    \node[below] at (\x * \d, -1) {$\n$};
                }

                \foreach \x in {-6, ..., 2} {
                    \node[dot] at (\x * \d, - 2 * \d) {};
                    \node[dot] at (\x * \d, - \d) {};
                }
                \node[dot] at (2 * \d, 0) {};
                \node[dot] at (2 * \d, \d) {};
            \end{tikzpicture}
        \end{center}
        In~this case, our method first spends $n/2$ gates to~pair the bits and then applies $n/2$ MDFA' blocks. The size of~the resulting circuit~is, up~to an~additive constant, $n/2+6\cdot n/2=3.5n$. This matches the upper bound $4.5n-2m$. Thus, to~improve the bound $4.5n-2m$ to~$4.5n-\beta m$, for $\beta > 2$, one needs to~find a~smaller circuit for this particular vector~$s^*$.
        And vice versa, by~proving a~lower bound $\size(\BA_n^{s^*}) \ge 3.5n-O(1)$,
        one would prove that $\alpha=2$.

        \item What is the smallest $\gamma$ such that
        \[\size(\BA_n^s) \le \gamma n-O(m)\]
        holds for every vector~$s$? In~this paper,
        we~proved that $\gamma \le 4.5$.
        Improving this seems to~be more challenging than just improving $2m$~to~$2.5m$ as~this would most probably require using new building blocks.

        \item Finally, note also that the upper bounds $5n-3m$ and $4.5n-2m$ hold for \emph{all} vectors~$s$.
        It~would be~interesting to~improve known upper and lower bounds for \emph{specific} vectors. Perhaps, one of~the most interesting such functions is~$\SUM_n$ (here, $s=(0,0,\dotsc,0)$). For~it, we~known an~upper bound $4.5n$ (originally proved
        by~Demenkov et al.~\cite{DBLP:journals/ipl/DemenkovKKY10}; also follows from our Theorem~\ref{theorem:main}) and a~lower bound $2.5n-O(1)$ due to~Stockmeyer~\cite{DBLP:journals/mst/Stockmeyer77}.
    \end{enumerate}

    \section*{Acknowledgments}
    We~thank the anonymous reviewers for many helpful comments.

    \bibliography{references}
\end{document}